\def\qed{\rule{2mm}{2mm}}
\mathchardef\dash="2D
\newtheorem{theorem}{Theorem}
\theoremstyle{definition}
\newtheorem{remark}{Remark}[section]
 \renewcommand{\SS}{\mathcal{I}}
 \newcommand{\CalNeg}{\SS_\cal^-}
 \newcommand{\CalPos}{\SS_\cal^+}
 \newcommand{\NCalNeg}{N_\cal^-}
 \newcommand{\NCalPos}{N_\cal^+}
 \newcommand{\NMain}{N_\main}
 \newcommand{\SCalNeg}{S_\cal^-}
 \newcommand{\SCalPos}{S_\cal^+}
 \newcommand{\SMain}{S_\main}
 \newcommand{\Main}{\SS_\main}
 \newcommand{\obs}{\mathrm{obs}}
 \renewcommand{\cal}{\mathrm{c}}
 \newcommand{\main}{\mathrm{m}}
 \newcommand{\scaln}{s_{\cal, \obs}^-}
  \newcommand{\scalp}{s_{\cal, \obs}^+}
 \newcommand{\smain}{s_{\main, \obs}}
 \newcommand{\bS}{\boldsymbol{S}}
 \newcommand{\bs}{\boldsymbol{s}}
 \newcommand{\sobs}{\bs_\obs}
 \newcommand{\btheta}{\boldsymbol{\theta}}
 \newcommand{\St}{\mathbb{S}}
\newcommand{\stobs}{\sobs}
\newcommand{\wTheta}{\widehat\Theta}
\newcommand{\wThetaa}{\widehat\Theta_{1-\alpha}}
\newcommand{\wThetac}{\widehat\Theta_{0.95}}
 \newcommand{\alt}{\mathrm{ alt}}
\newcommand{\wThetaalt}{\widehat\Theta_{1-\alpha}^{\alt}}
\newcommand{\wThetacalt}{\widehat\Theta_{0.95}^{\alt}}
\newcommand{\nn}{\nu}
\begin{document}

\author{
Panos Toulis\thanks{Email:~\url{panos.toulis@chicagobooth.edu}.
Code is available at~\url{https://github.com/ptoulis/covid-19}.} \\
University of Chicago, \\Booth School of Business \\
}

\bigskip

\title{Estimation of Covid-19 Prevalence from Serology Tests: \\ 
A Partial Identification Approach}

\date{}
\maketitle

\begin{abstract}
We propose a partial identification method for estimating disease prevalence 
from serology studies. Our data are results from antibody tests in some population sample, where the test parameters, such as the true/false positive rates, are unknown. 
Our method scans the entire parameter space, and rejects parameter values using the joint data density as the test statistic.
The proposed method is conservative for marginal inference, in general, but its key advantage over more standard approaches is that it is valid in finite samples even when the underlying model is not point identified. Moreover, our method requires only independence of serology test results, 
and does not rely on asymptotic arguments, normality assumptions, or other approximations.
We use recent Covid-19 serology studies in the US,  and show that the parameter confidence set is generally wide, and cannot support  definite conclusions. 
Specifically, recent serology studies from California suggest a prevalence anywhere in the range 0\%-2\%~(at the time of study),
and are therefore inconclusive. However, this range could be narrowed down 
to 0.7\%-1.5\% if the actual false positive rate of the antibody test was indeed near its empirical estimate~($\sim$0.5\%).
 In another study from New York state, Covid-19 prevalence is confidently estimated in the range 
13\%-17\% in mid-April of 2020,
 which also suggests significant geographic variation in Covid-19 exposure across the US.
 Combining all datasets yields a  5\%-8\% prevalence range.
Our results overall suggest that serology testing on a massive scale can give crucial information for future policy design, even when such tests are imperfect and their parameters unknown.
\end{abstract}

\vspace{20px}
\noindent {\em Keywords}: partial identification; disease prevalence; serology tests; Covid-19. 

\vspace{-5px}
\noindent{\em JEL classification codes}: C12, C14, I10.
\noindent 

\thispagestyle{empty} 
\newpage

\setcounter{page}{1}

\section{Introduction} \label{sec:intro}
Since December 2019 the world has been facing the Covid-19 pandemic, 
and its disastrous effects in human life and the economy. 
Responding to the pandemic, most countries have closed off their borders, and imposed unprecedented, universal lockdowns 
on their entire economies. The key reason for such drastic measures is uncertainty: 
we do not yet know the actual transmission rate, the lethality, or the prevalence of this new deadly disease. As governments and policy makers were caught by surprise, there is no doubt that these drastic measures were needed as a first line of defense. The data show that we 
would have to deal with a massive humanitarian disaster otherwise.

At the same time, as the economic pain mounts, especially for the most vulnerable and disadvantaged segments of the population,  there is an urgent need to think of careful ways to safely reopen the economy.
Estimating the true prevalence of Covid-19 has been identified as a key parameter to this effort~\citep{alvarez2020simple}.
In the United States, the number of confirmed Covid-19 cases is 1,193,813 as of May 7 with 
70,802 total deaths. 
This implies a (case) prevalence of 0.36\% (assuming 328m as the US population), 
and a 5.9\% mortality rate of Covid-19, which is even higher than the mortality rate 
reported at times by the World Health Organization.\footnote{The official mortality rate was revised from 2\% in late January to 4\% in early March; see also an official WHO situation report from early March: \url{https://www.who.int/docs/default-source/coronaviruse/situation-reports/20200306-sitrep-46-covid-19.pdf?sfvrsn=96b04adf_2}.}~However, the true prevalence, that is, the number of people who are currently infected
 or have been infected by Covid-19 over the entire population is likely much higher, 
and so the mortality rate should be significantly lower than 5.9\%.
A growing literature is attempting to estimate these numbers through epidemiological models~\citep{li2020substantial, flaxman2020report}, or structural assumptions
~\citep{hortaccsu2020estimating}.
 
A more robust alternative seems to be possible through randomized serological studies that detect marker antibodies indicating exposure to Covid-19. In the US, there is currently a massive coordinated effort to evaluate the widespread application of these tests. The results 
 are expected in late May of 2020.\footnote{CDC page: \url{https://www.cdc.gov/coronavirus/2019-ncov/lab/serology-testing.html}.}
The hope is that these tests will determine the true prevalence of the virus, and thus its lethality, 
and also determine whether someone is immune enough to return to work~(the extent of immunity is still uncertain, however).
Furthermore, seroprevalence studies can give information on risk factors for the disease, such as a patient's age, location, or underlying health conditions. They may also reveal important medical information on immune responses to the virus, such as how long antibodies last in people’s bodies following infection, 
and could also identify those able to donate blood plasma, which is a possible treatment to seriously ill Covid-19 patients.\footnote{Food and Drug Administration (FDA) announcement on serology studies~(04/07/2020):~\url{https://www.fda.gov/news-events/press-announcements/coronavirus-covid-19-update-serological-tests}.}
The development of serology tests is therefore essential to designing a careful strategy towards both  effective medical treatments and a gradual reopening of the  economy.

Until widespread serology testing is possible, however, we have to rely on a limited number of serology studies that have started to emerge in various areas of the globe, including the US. 
Table~\ref{tab1} presents a non-exhaustive summary of such studies around the world.
For example, in Germany, serology tests in early April  showed a 14\% prevalence in a sample of 500 people.\footnote{Report in German:~\url{https://www.land.nrw/sites/default/files/asset/document/zwischenergebnis_covid19_case_study_gangelt_0.pdf}.}
In the Netherlands, a study in mid-April showed 
a lower prevalence at 3.5\% in a small sample of blood donors.\footnote{Presentation slides in Dutch:~\url{https://www.tweedekamer.nl/sites/default/files/atoms/files/tb_jaap_van_dissel_1604_1.pdf}. 
}~\footnote{
See also a summary of these projects in the journal ``Science":~\url{https://www.sciencemag.org/news/2020/04/antibody-surveys-suggesting-vast-undercount-coronavirus-infections-may-be-unreliable}.
}~In the US, in a recent and relatively large study in Santa Clara, California,~\citet{bendavid2020covid} 
estimated an in-sample prevalence of 1.5\% from 50 positive test results in a sample of 3330 patients. 
Using a reweighing technique, the authors extrapolated this estimate to 2-4\% prevalence
in the general population. A follow-up study in LA County found 35 positives out of 846 tests. 
What is unique about these last two studies is that data from a prior validation study are also available, where, say, 401 ``true negatives" were tested with 2 positive results, implying a false positive rate of 0.5\%.
Upon publication, these studies received intense criticism because the false positive rate appears to be large enough compared to the underlying disease prevalence. For example, 
 the Agresti-Coull and Clopper-Pearson 95\% confidence intervals for the false positive rate are $[0.014\%, 1.92\%]$ and $[0.06\%, 1.79\%]$, respectively.
These intervals for the false positive rate are not incompatible even with a 0\% prevalence, since a 1.5\% false positive rate achieves $0.015 \times 3330 \approx 50$ (false) positives on average, same as the observed value in the sample.

Such standard methods, however, are justified based on approximations, asymptotic arguments, prior specifications (for Bayesian methods), or normality assumptions, which are always suspect in small samples.~In this paper, we develop a method that can assess finite-sample statistical significance in a robust way.
The key idea is to treat all unknown quantities as parameters, and explore the entire parameter space to assess agreement with the observed data. 
Our method adopts the partial identification framework, where the goal is not to produce point estimates, but to identify sets of plausible parameter values~\citep{wooldridge2007s, tamer2010partial, chernozhukov2007estimation, manski2003partial, manski2010partial,manski2007partial, romano2008inference, romano2010inference, honore2006bounds, imbens2004confidence, beresteanu2012partial, stoye2009more, kaido2019confidence}.
Within that literature, our proposed method appears to be unique in the sense that it constructs a procedure that is valid in finite samples given the correct distribution of the test statistic. Importantly, the choice of the test statistic can affect only the power of our method, but not its validity. 
Such flexibility may be especially valuable in choosing a test statistic that is both powerful and easy to compute. Thus, the main benefit of our approach is that it is valid with {\em enough computation}, whereas 
more standard methods are only valid with {\em enough samples}.

 The rest of this paper is structured as follows. In Section~\ref{sec:setup} we describe the problem formally. In Section~\ref{sec:method_overview} we describe the proposed method on a high level.
A more detailed analysis along with a modicum of theory is given in Section~\ref{sec:method_details}.
In Section~\ref{sec:application} we apply the proposed method 
on data from the Santa Clara study, the LA County study, and a recent study from New York state. 
 
 \renewcommand{\arraystretch}{1.4}
 \begin{table}[]
\small
\begin{tabular}{llcll}
Prevalence  & Location    & Time  & Method     & Notes                                                                                                                                                        \\
\hline
6.14\%      & China       & 01/21 & PCR      & Sample of 342 pneumonia patients~\citep{liang2020prevalence}.                                                                                                                          \\
2.6\%       & Italy       & 02/21 & PCR      & Used 80\% of entire population in V\'{o}, Italy~\citep{lavezzo2020suppression}.                                                                                                                             \\
5.3\%       & USA         & 03/12 & PCR      & Used 131 patients with ILI symptoms~\citep{spellberg2020community}.                                                                                                        \\
13.7\% & USA & 03/22 & PCR & Sample of 215 pregnant women in NYC~\citep{sutton2020universal}.
\\
0.34\%      & USA         & 03/17 & model    & \citep{yadlowsky2020estimation}.                                                                                                                                                           \\
9.4\%       & Spain       & 03/28 & serology & Sample of 578 healthcare workers~\citep{alberto2020}.  \\
3\% & Japan       & 03/31 & serology &
Random set of 1000 blood samples in Kobe Hospital~\citep{doi2020estimation}.
\\
36\%        & USA         & 04/02 & PCR      & Study in large homeless shelter in Boston~\citep{baggett2020prevalence}.                                                                                                                             \\
1.5\% & USA         & 04/03 & serology & Recruited 3330 people via Facebook~\citep{bendavid2020covid}.   
 \\
2.5\%       & USA         & 04/04 & model    & Uses ILINet data;
implies 96\% unreported cases~\citep{lu2020estimating}.
  \\
  9.1\% & Switzerland & 04/06 & serology & 
  Sample of 1335 individuals in Geneva~\citep{stringhini2020repeated}.
  \\
  14\%        & Germany     & 04/09 & serology & Self-reported 400 households
  in Gangelt.~\href{https://www.tweedekamer.nl/sites/default/files/atoms/files/tb_jaap_van_dissel_1604_1.pdf}{(Streeck et al., 2020).}                                                                                                \\
3.1\% & Netherlands & 04/16 & antigen  & Used 3\% of all blood donors~\href{https://www.land.nrw/sites/default/files/asset/document/zwischenergebnis_covid19_case_study_gangelt_0.pdf}{(Jaap van Dissel, 2020)}.                                                                                                                         
\\
0.4-40\% & USA & 04/24 & model & 
Partial identification using \#cases/deaths~\citep{manski2020estimating}.
\\
\hline
\hline
Unreported       &  & & \\             
\hline
90\%       & USA         & 03/16 & model    & 
Used airflight data to identify transmission rates~\citep{hortaccsu2020estimating}.
                                                                                                                        \\
85\%         & India       & 04/02         & model               & Extrapolated from respiratory-related 
 cases~\citep{venkatesan2020estimate}.                               \\                                                                                        
\end{tabular}
\caption{\footnotesize Summary of recent Covid-19 prevalence studies crudely categorized 
as either statistical models or medical tests (PCR or serology). Most studies report intervals, 
but here we mainly report midpoints.~{\em Top panel:}~PCR stands for ``polymerase chain reaction", and is a key test to detect presence of the virus' RNA; {\em serology} denotes a serology test to detect presence of antibodies (e.g., IgA, IgM, IgG). {\em ILI} stands for ``influenza-like illness" and describes methods that use data recorded 
from patients with general influenza-like symptoms, including but not limited to Covid-19 cases.~{\em Bottom panel}:~Studies that aimed mainly to estimate percentage of Covid-19 cases that are not reported. This number can be used to calculate prevalence estimates, 
but needs to be adjusted for the exact timeframe of the study; e.g., the analysis of 
~\citet{hortaccsu2020estimating} implies a 1.5-2.5\% prevalence in Santa Clara county in mid-March.
}
\label{tab1}
\end{table} 

\section{Problem Setup}\label{sec:setup}
Here, we formalize the statistical problem of estimating disease prevalence through imperfect medical tests. 
Every individual $i$ is associated with a binary status $x_i$: it is $x_i= 1$ if the individual has developed antibodies from exposure to the disease, and $x_i=0$ if not. We will also refer to these cases as ``positive" and ``negative",  respectively. 
Patient status is not directly observed, but can be estimated with a serology (antibody) test.

This medical antibody test can be represented by  a function $t: \{0, 1\} \to \{0, 1\}$, 
and determines whether someone is positive or  negative. As usual, the categorization of the test results can be described through the following table:

\renewcommand{\arraystretch}{1.5}
\begin{table}[h!]
\centering
\begin{tabular}{l  l | lll}
                                                                                            &   & \multicolumn{2}{c}{true status, $x=$}               &  \\
                                                                                            &   & \multicolumn{1}{c}{0} & \multicolumn{1}{c}{1} &  \\
                                                                                            \hline
\multirow{2}{*}{\begin{tabular}[c]{@{}l@{}}$t(x)=$\end{tabular}} & 0 & true negative         & false negative        &  \\
                                                                                           & 1 & false positive        & true positive         & 
\end{tabular}
\end{table}
\noindent We will assume that each test result is an independent random outcome, such that the true positive rate and false positive rate, denoted respectively by $q$ and $p$,\footnote{The terms ``sensitivity" and ``specificity" are frequently used in practice of medical testing. 
In our setting, sensitivity maps to the true positive rate $(q)$, 
and specificity maps to one minus the false positive rate $(1-p)$. 
In this paper, we will only use the terms ``true/false positive rate" as they are more precise and self-explanatory.
}
are constant:
\begin{align}\label{eq:A12}
P\left \{ t(x) =1 | x =1 \right\} = q, \tag{A1} \\
P\left \{ t(x) =1 | x =0 \right\} = p. \tag{A2}
\end{align}
This assumption may be untenable in practice. In general, 
patient characteristics, or test target and delivery conditions can affect the test results. For example, 
\citet{bendavid2020covid} report slightly different test performance characteristics depending on which antibody (either IgM or IgG) was being detected. 
We note, however, that this assumption is not strictly necessary for the validity of our proposed inference procedure. It is only useful in order to obtain a precise calculation for the distribution of the test statistic~(see Theorem~\ref{thm} and remarks).

To determine test performance characteristics, and gain information about the true/false positive rates of the antibody test, there is usually a {\em validation study} where the 
underlying status of participating individuals is known. In the Covid-19 case, for example, such validation study could include pre-Covid-19 blood samples that have been preserved, and are thus ``true negatives".
To simplify, we assume that in the validation study there is a set $\CalNeg$ of participating individuals, where it is known that everyone is a true negative, and a set $\CalPos$ where everyone is positive; 
i.e.,
$$
x_i=0,~\text{for all}~i\in \CalNeg,\text{and}~x_i=1,~\text{for all}~i\in \CalPos.
$$
There is also the {\em main study} with a set $\Main$ of participating individuals, where the true status is not known. We assume no overlap between sets $\CalNeg, \CalPos$ and $\Main$, which is a 
realistic assumption.
We define  $\NCalNeg = |\CalNeg|$ and $\NCalPos=|\CalPos|$ as the respective number of participants in the 
validation 
study, and $N_\main = |\SS_\main|$ as the number of participants in the main study.
These numbers are observed, but the full patient sets or the patient characteristics, may not be observed.

We also observe the positive test results in both studies:
\begin{align}\label{eq:obs}
\SCalNeg & = \sum_{i\in\CalNeg} t(x_i),~\SCalPos = \sum_{i\in\CalPos} t(x_{i}),
\nonumber & \text{[Calibration study]} \\
\SMain & = \sum_{i\in \Main} t(x_i).
& \text{[Main study]}
\end{align}
Thus, $\SCalNeg$ is the number of false positives in the validation study since we know that 
all individuals in $\CalNeg$ are true negatives. Similarly, $\SCalPos$ is the number of true positives in the validation study since all individuals in $\CalPos$ are known to be positive.
These numbers offer some simple estimates of the false positive rate and true positive rate of the medical test, respectively: $\hat p = \SCalNeg / \NCalNeg$ and $\hat q = \SCalPos / \NCalPos$.
We use $(\scaln, \scalp, \smain)$ to denote the observed values of test positives
$(\SCalNeg, \SCalPos, \SMain)$, respectively, which are integer-valued random variables. 

The statistical task is therefore to use observed data 
$\{(\NCalNeg, \NCalPos, N_\main), (\scaln, \scalp, \smain)\}$ and do  inference on the quantity:
\begin{equation}\label{eq:pi}
\quad \pi = \frac{\sum_{i\in \SS_\main} x_i}{N_\main},
\end{equation}
i.e., the unknown disease prevalence in the main study. We emphasize 
that $\pi$ is a finite-population estimand --- we discuss (briefly) the issue of extrapolation to the general population in Section~\ref{sec:discussion}.
The challenge here is that $\SMain$ generally includes both false positives and true positives, 
which depends on the unknown test parameters, namely the true/false positive rates $q$ and $p$.
Since $\pi N_\main$ is the (unknown) number of infected individuals in the main study, 
we can use Assumptions (A1) and (A2) to write down this decomposition formally:
\begin{align}\label{eq:model}
\SCalNeg & \sim \mathrm{Binom}(\NCalNeg, p),~\SCalPos  \sim \mathrm{Binom}(\NCalPos, q),\text{and}~\nonumber\\
S_\main & \sim \underbrace{\mathrm{Binom}(\pi N_\main, q)}_{\text{true positives}} +  
\underbrace{\mathrm{Binom}(N_\main - \pi N_\main, p)}_{\text{false positives}},
\end{align}
where $\mathrm{Binom}$ denotes the binomial random variable.  
For brevity, we define $\bS = (\SCalNeg, \SCalPos, \SMain)$ as our joint data statistic,
and $\btheta = (p, q, \pi)$ as the joint parameter value.
The independence of tests implies that the density of $\bS$ can be computed exactly as follows.
\begin{equation}\label{eq:f}
f(\bS \mid \btheta) = \mathrm{d}(\SCalNeg; \NCalNeg, p) \cdot
\mathrm{d}(\SCalPos; \NCalPos, q) \cdot
\left[\sum_{j=0}^{j=\SMain}  \mathrm{d}(j; \pi \NMain, q) \cdot \mathrm{d}(\SMain-j; \NMain-\pi \NMain, p)\right],
\end{equation}
where $\mathrm{d}(k; n, s)$ denotes the probability of $k$ successes in a binomial 
experiment with $n$ trials and $s$ probability of success. There are several ways to implement Equation~\eqref{eq:f} efficiently --- we defer discussions on computational issues to Section~\ref{sec:discussion}.

\section{Method}\label{sec:method}
We begin with an illustrative example 
to describe the proposed method on a high level. 
We give more details along with some theoretical guarantees in the section that follows.

\subsection{Illustrative example}\label{sec:method_overview}
Let us consider the Santa Clara study~\citep{bendavid2020covid} with observed data:
$$
(\NCalNeg, \NCalPos, \NMain) =(401, 197, 3330),~\text{and}~ 
(\scaln, \scalp, \smain) = (2, 178, 50).
$$
The unknown quantities in our analysis are $q,p$ and $\pi$: the true positive rate of the test, the false positive rate, and the unknown prevalence in the main study, respectively.
Assume zero prevalence ($\pi=0\%$), 90\% true positive rate ($q=0.9$), and 
1.5\% false positive rate ($p=0.015$). We ask the question:
``Is the combination $(p, q, \pi) = (0.015, 0.90, 0)$ compatible with the data?".
Naturally, this can be framed in statistical terms as a null hypothesis:
\begin{equation}\label{eq:H0}
H_0: (p, q, \pi) = (0.015, 0.90, 0).
\end{equation}
To test $H_0$ we have to compare the observed positive test results with the values that {\em could have been observed} 
if indeed the true model parameter values were $(p, q, \pi) = (0.015, 0.90, 0)$.
Our model is simple enough that we can execute this hypothetical analysis exactly based on the density of $f(\bS | \btheta)$ in Equation~\eqref{eq:f}, where $\bS = (\SCalNeg, \SCalPos, \SMain)$ is the 
vector of all positive test results, and $\btheta$ is specified as in $H_0$;
see Figure~\ref{fig:example1}. 
\begin{figure}[t!]
\centering
\includegraphics[scale=0.5]{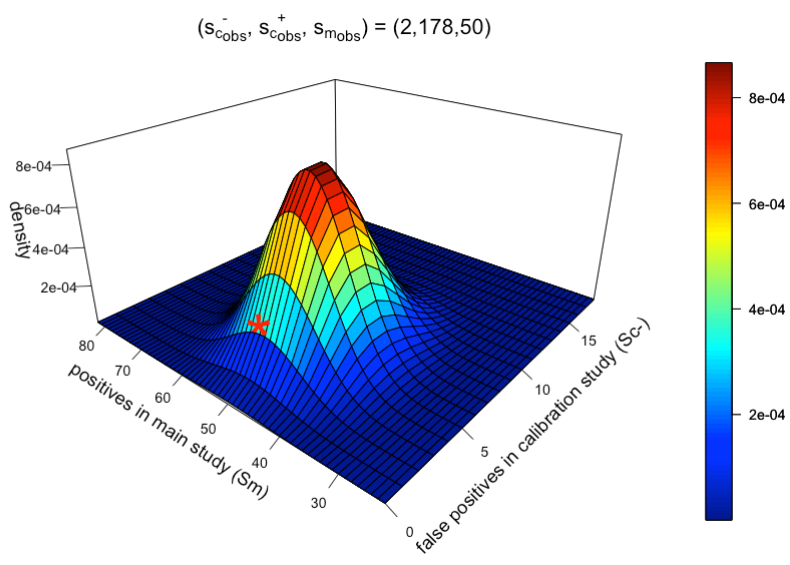}
\caption{Density $f(\bS | H_0)$ of test positives $\bS = (\SCalNeg, \SCalPos, \SMain)$ conditional on $H_0$ of Equation~\eqref{eq:H0}, and with fixed $\SCalPos=178$ (its observed sample value) to ease visualization.
 The observed values $(\scaln, \smain) = (2, 50)$ are marked with an asterisk. To test $H_0$ we need to calculate some kind of ``p-value" for the observed point. In our construction, we simply test whether the density at the observed values exceeds an appropriately chosen threshold~(see Section~\ref{sec:method_details}). }
\label{fig:example1}
\end{figure}
~To simplify visualization, in Figure~\ref{fig:example1} we fix the component $\SCalPos$ of $\bS$ to its observed value 
$(\SCalPos= 178$), and only plot the density with respect to the other two components, $(\SCalNeg, \SMain)$; i.e., we plot $f(\bS \mid H_0, \SCalPos=178)$.
One can visualize the full joint distribution $f(\bS \mid H_0)$ as a collection of such 
conditional densities for all possible values of $\SCalPos$.

The next step is to decide whether the observed value of $\bS$, namely 
$\sobs = (2,178, 50)$, is compatible with the distribution of Figure~\ref{fig:example1}. 
We see that the mode of the distribution is around the point $(\SCalNeg, \SMain) = (5, 45)$, whereas the point $(2, 50)$ is at the lower edge of the distribution. If the observed values were even further, say at $(\SCalNeg, \SMain) = (2, 80)$, then 
we could confidently reject $H_0$ since the density at $(2, 80)$ is basically zero. Here, we have to be careful because the actual observed values are still somewhat likely under $H_0$. 
Our method essentially accepts $H_0$ when the density of this distribution at the observed value $\sobs$ of statistic $\bS$ is above some 
threshold $c_0$, that is, we decide based on the following rule:
\begin{equation}\label{eq:method}
\text{Accept}~H_0~\text{if}~f(\sobs | H_0) > c_0.
\end{equation}

The test in Equation~\eqref{eq:method} is reminiscent of the likelihood ratio test, the key difference being that our test does not require maximizations of the likelihood function over the parameter space, which is computationally intensive, and frequently unstable numerically --- we make a concrete comparison in the application of  Section~\ref{sec:compare2}. Our test essentially uses the density of $\bS$ as the test statistic for $H_0$, 
while threshold $c_0$ generally depends on the particular null values being tested. Assuming that the test of Equation~\eqref{eq:method} has been defined, we can then test for all possible combinations of our parameter values, $\btheta \in \Theta$, in some large enough parameter space $\Theta$, 
and then invert this procedure in order to construct the confidence set.
As usual, we would like this confidence set to cover the true parameters with some minimum probability~(e.g., 95\%). 
In the following section, we show that this is possible through an appropriate construction of the test in Equation~\eqref{eq:method}, which takes into account the level sets of the density function depicted in Figure~\ref{fig:example1}. The overall procedure is computationally intensive, but is valid in finite samples without the need of asymptotic or normality assumptions.
The details of this construction, including the appropriate selection 
of the test threshold and the proof of validity,  are presented in the following section.

\subsection{Theoretical Details}\label{sec:method_details}
Let $\bS = (\SCalNeg, \SCalPos, \SMain) \in \mathbb{S}$ denote the statistic, 
where $\St=\{0, \ldots, \NCalNeg\} \times \{0, \ldots, \NCalPos\} \times \{0, \ldots, \NMain\}$,
and let $\btheta = (p, q, \pi) \in\Theta$ be the model parameters.
We take $\Theta$ to be finite and discrete; e.g., for probabilities 
we take a grid of values between 0 and 1. 
Let $\sobs = (\scaln, \scalp, \smain)$ denote the observed value of $\bS$ in the sample.
Let $f(\bS | \btheta)$ denote the density of the joint statistic conditional on the model parameter value $\btheta$, as defined in Equation~\eqref{eq:f}.
Suppose that $\btheta_0$ is the true unknown parameter value, and assume that 
\begin{equation}
P(\btheta_0 \in \Theta) = 1.\tag{A4}
\end{equation}
Assumption (A4) basically posits that our discretization is fine enough to include the true parameter value with probability one. In our application, this assumption is rather mild as we are dealing with parameters that are either probabilities or integers, and so bounded within well-defined ranges.
Moreover, this assumption is implicit essentially in all empirical work since computers operate with finite precision.
Our goal is to construct a confidence set $\wTheta_{1-\alpha} \subseteq\Theta$ such that
$
P(\btheta_0 \in\wTheta_{1-\alpha}) \ge 1-\alpha,
$
where $\alpha$ is some desired level (e.g., $\alpha=0.05$).
Trivially, $\wTheta_{1-\alpha}=\Theta$ satisfies this criterion, so we will aim to make $\wTheta$ as narrow as possible. We will also need the following definition:
\begin{equation}\label{eq:nu}
\nn(z, \btheta) = \big | \{ \bs\in\mathbb{S} : 0 < f(\bs | \btheta) \le z \} \big |.
\end{equation}
Function $\nu$ depends on level sets of $f$, and counts the number of sample data points~(over the sample space $\St$) with likelihood at $\btheta$ that is smaller than the observed likelihood at $\btheta$.

We can now prove the following theorem.

\newcommand{\zth}{z^*_{\btheta}}
\newcommand{\ztho}{z^*_{\btheta_0}}

\begin{theorem}\label{thm}
Suppose that Assumption (A4) holds. Consider the following construction for the confidence set:
\begin{equation}\label{eq:Theta1}
\wThetaa = \big\{\btheta \in \Theta : f(\sobs | \btheta) \cdot \nn(f(\sobs | \btheta), \btheta) > \alpha \big\}.
\end{equation}
Then, $P(\btheta_0 \in \wThetaa) \ge 1-\alpha$.
\end{theorem}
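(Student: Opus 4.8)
The plan is to reduce the statement to a pointwise claim about the true value $\btheta_0$: fixing $\btheta_0$ (or, if $\btheta_0$ is random, conditioning on it), it suffices to show that when $\bS \sim f(\cdot \mid \btheta_0)$,
\[
P_{\btheta_0}\!\big(\, f(\bS \mid \btheta_0)\cdot \nn(f(\bS \mid \btheta_0),\btheta_0) \le \alpha \,\big) \;\le\; \alpha .
\]
Indeed, by the construction in Equation~\eqref{eq:Theta1} the event $\{\btheta_0 \in \wThetaa\}$ is exactly the complement of the event above evaluated at $\bS = \sobs$, so this bound gives $P_{\btheta_0}(\btheta_0 \in \wThetaa) \ge 1-\alpha$, and averaging over $\btheta_0$ using Assumption (A4) yields the unconditional conclusion $P(\btheta_0 \in \wThetaa) \ge 1-\alpha$.

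Write $T(\bs) = f(\bs \mid \btheta_0)\cdot \nn(f(\bs \mid \btheta_0),\btheta_0)$ and let $A = \{\bs \in \St : f(\bs\mid\btheta_0)>0,\ T(\bs)\le\alpha\}$; since the sample points of zero density carry no $P_{\btheta_0}$-mass, $P_{\btheta_0}(T(\bS)\le\alpha) = \sum_{\bs\in A} f(\bs\mid\btheta_0)$. The key structural fact is that $A$ is a \emph{lower level set} of $f(\cdot\mid\btheta_0)$. This is because $z\mapsto \nn(z,\btheta_0)$ is nondecreasing (the counted set in Equation~\eqref{eq:nu} only grows with $z$), so $0<f(\bs'\mid\btheta_0)\le f(\bs\mid\btheta_0)$ implies $T(\bs')\le T(\bs)$; hence $\bs\in A \Rightarrow \bs'\in A$. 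Consequently, setting $z^\star = \max\{f(\bs\mid\btheta_0):\bs\in A\}$ (the case $A=\emptyset$ being trivial), we get $A = \{\bs\in\St : 0<f(\bs\mid\btheta_0)\le z^\star\}$, and therefore $|A| = \nn(z^\star,\btheta_0)$ directly from the definition of $\nn$.

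Now a crude bound finishes it:
\[
\sum_{\bs\in A} f(\bs\mid\btheta_0) \;\le\; |A|\cdot\max_{\bs\in A} f(\bs\mid\btheta_0) \;=\; \nn(z^\star,\btheta_0)\cdot z^\star \;=\; T(\bs^\star) \;\le\; \alpha ,
\]
where $\bs^\star$ is any point of $A$ attaining density $z^\star$, and the last inequality holds precisely because $\bs^\star\in A$. This is the whole argument: the factor $\nn$ makes the cutoff ``self-calibrating'' in that the same quantity $z^\star\nn(z^\star,\btheta_0)$ both upper-bounds the rejection probability on $A$ and is itself at most $\alpha$ on $A$.

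The only genuinely nontrivial step is recognizing that the rejection region induced by the rule in Equation~\eqref{eq:Theta1} is forced to be a lower level set of the density; everything else is bookkeeping. Two small points to verify carefully: ties in the density values (points sharing an $f$-value have identical $T$-values and are grouped together automatically by $\nn$, so no measure-zero adjustments are needed), and the degenerate cases $A=\emptyset$ and $\alpha\ge1$, both of which are immediate. I also expect that Assumption (A4) enters only in the final averaging over $\btheta_0$, since the pointwise bound is distribution-exact and requires no asymptotics, normality, or continuity.
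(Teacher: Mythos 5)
Your proof is correct and takes essentially the same route as the paper: both arguments hinge on the monotonicity of $z \mapsto z\,\nn(z,\btheta_0)$, which forces the rejection event to be a lower level set $\{f(\bS\mid\btheta_0)\le z^\star\}$ of the density, whose probability is then bounded by $z^\star\,\nn(z^\star,\btheta_0)\le\alpha$. The only difference is bookkeeping: you define $z^\star$ as the maximal density value inside the rejection region, while the paper defines the equivalent threshold $z^*_{\btheta_0}$ as the largest achievable density value $z$ satisfying $z\,\nn(z,\btheta_0)\le\alpha$ (or the solution of $z\,\nn(z,\btheta_0)=\alpha$ when it exists), so the two bounds coincide.
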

\begin{proof}
For any fixed $\btheta\in\Theta$ consider the function $g(z, \btheta) = z \cdot \nn(z, \btheta)$, $z\in[0,1]$.
Note that, for fixed $\btheta$, function $g(z, \btheta)$ is monotone increasing and generally not continuous 
with respect to $z$.
Let $\mathbb{F}_{\btheta}= \{ f(\bs | \btheta) : \bs\in\St\}$, and define 
as $\zth$ the unique fixed point for which 
$g(\zth, \btheta)- \alpha = 0$, if that point exists; if not, define 
the point as $\zth = \max\{ z \in \mathbb{F}_{\btheta} : g(z, \btheta) \le \alpha$\}. 
It follows that $g(\zth, \btheta) \le \alpha$, for any $\btheta$, and so the event
$\{ f(\sobs | \btheta) \cdot \nn(f(\sobs | \btheta), \btheta) \le \alpha\}
= \{ g(f(\sobs | \btheta), \btheta) \le \alpha\}$
is the same as the event $\{f(\stobs | \btheta) \le \zth\}$. 
Now we can bound the coverage  probability as follows.

\begin{align}
P(\btheta_0 \notin \wThetaa) 
& = P\left\{ f(\stobs | \btheta_0) \cdot \nn(f(\stobs | \btheta_0), \btheta_0) \le \alpha \right\}\nonumber\\
& = P\left\{ f(\stobs | \btheta_0)  \le \ztho \right\} = \sum_{\bs\in\mathbb{S}} \mathbb{I}\left\{ f(\bs | \btheta_0)  \le 
\ztho \right\} f(\bs | \btheta_0) \\
& \le \ztho \sum_{\bs\in\mathbb{S} : 0 < f(\bs | \btheta_0)\le \ztho} 1  = 
\ztho \cdot \nn(\ztho, \btheta_0) = g(\ztho, \btheta_0) \le  \alpha.\nonumber
\end{align}
In the first line we used the test definition and Assumption~(A4); in the second line, we used 
the monotonicity of $g$, and the fact that $\btheta_0$ is the true parameter value; in the last line, we used the definition of $\zth$ and the uniform bound on $g$.
\end{proof}

When $z^\ast_{\btheta}$ is not a discontinuity point of $g$, for all $\btheta$, then our 
test is exact in the sense that $P(\btheta_0\in\wTheta_{1-\alpha}) = 1-\alpha$.
In general, however, this condition will not hold for all $\Theta$, and so the confidence set
of Equation~\eqref{eq:Theta1} may be conservative and lose power. 
We could potentially achieve more power if instead we define the confidence set as follows:
\begin{equation}\label{eq:Theta2}
\wThetaalt = \left\{\btheta\in\Theta:
\sum_{\bs\in\mathbb{S}} \mathbb{I}\big\{f(\bs | \btheta) \le f(\sobs | \btheta) \big\} f(\bs | \btheta) > \alpha \right\}.
\end{equation}
\begin{theorem}\label{thm2}
Suppose that Assumption (A4) holds. Consider the construction of confidence set
$\wThetaalt$ as defined in Equation~\eqref{eq:Theta2}.
Then, $P(\btheta_0 \in \wThetaalt) \ge 1-\alpha$.
\end{theorem}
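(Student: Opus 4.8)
The plan is to recognize the left-hand side of the inequality defining $\wThetaalt$ in Equation~\eqref{eq:Theta2} as a bona fide $p$-value for the point null $\btheta=\btheta_0$ --- a statistic that is stochastically at least as large as a uniform variate when the data are drawn from $\btheta_0$ --- after which coverage follows by inversion, exactly as in Theorem~\ref{thm}. Concretely, I would fix the true value $\btheta_0$ and condition on $\{\btheta_0\in\Theta\}$, which has probability one under Assumption~(A4); writing $\bS$ for the random sample drawn from $f(\cdot\mid\btheta_0)$ and $\sobs$ for its realization, the plan is to write
$$
P(\btheta_0\notin\wThetaalt)=P\Big\{\textstyle\sum_{\bs\in\St}\mathbb{I}\big\{f(\bs\mid\btheta_0)\le f(\sobs\mid\btheta_0)\big\}\,f(\bs\mid\btheta_0)\le\alpha\Big\}.
$$

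The key observation is that, for any realized value of the data, the inner sum equals $G\big(f(\bS\mid\btheta_0)\big)$, where $G$ denotes the cumulative distribution function of the real-valued random variable $Y:=f(\bS\mid\btheta_0)$ under $\btheta_0$; that is, the test statistic is simply the likelihood evaluated at the data, pushed through its own CDF. The problem therefore reduces to the elementary claim $P\{G(Y)\le\alpha\}\le\alpha$ for every $\alpha\in[0,1]$. Since $\St$ is finite I would just prove this directly: enumerate the distinct values $v_1<\cdots<v_m$ taken by $Y$, with masses $\pi_1,\dots,\pi_m$, note that $G$ is nondecreasing so $\{k:G(v_k)\le\alpha\}$ is an initial segment $\{1,\dots,k^\ast\}$ (empty if no such $k$, in which case the probability is $0\le\alpha$), and conclude $P\{G(Y)\le\alpha\}=\sum_{k\le k^\ast}\pi_k=G(v_{k^\ast})\le\alpha$ by the definition of $k^\ast$. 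Combining with the display above gives $P(\btheta_0\notin\wThetaalt)\le\alpha$, which is the claim.

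I would also add a short remark explaining why this construction refines the one of Theorem~\ref{thm}: every term appearing in the sum of Equation~\eqref{eq:Theta2} is at most $f(\sobs\mid\btheta)$, and the number of strictly positive such terms is $\nn(f(\sobs\mid\btheta),\btheta)$, so the sum is bounded above by $f(\sobs\mid\btheta)\cdot\nn(f(\sobs\mid\btheta),\btheta)$; hence membership in $\wThetaalt$ implies membership in $\wThetaa$, i.e. $\wThetaalt\subseteq\wThetaa$.

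I do not expect a substantive obstacle here --- the statement is essentially the classical fact that a likelihood/density level-set $p$-value is valid --- so the only thing that needs care is the discrete bookkeeping. Because $G$ is a step function, one must argue via its monotone jump structure rather than invoke continuity as in the usual probability integral transform; ties at the observed likelihood level are (correctly) included in the sum, which is exactly what keeps the $p$-value conservative under discreteness; and sample points $\bs$ with $f(\bs\mid\btheta_0)=0$ fall into the rejection region but contribute zero probability under $\btheta_0$, so they may be dropped harmlessly from the enumeration above.
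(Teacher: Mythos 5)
Your proof is correct and takes essentially the same route as the paper: the paper proves Theorem~\ref{thm2} by rerunning the argument of Theorem~\ref{thm} with $g(z,\btheta)=\sum_{\bs\in\St}\mathbb{I}\{f(\bs\mid\btheta)\le z\}f(\bs\mid\btheta)$, which is exactly the CDF $G$ of $Y=f(\bS\mid\btheta_0)$ that you use, and its threshold $z^*_{\btheta_0}$ plays the role of your $v_{k^\ast}$ in the discrete (atom-by-atom) monotonicity argument. Your closing observation that the sum is bounded by $f(\sobs\mid\btheta)\cdot\nn(f(\sobs\mid\btheta),\btheta)$, hence $\wThetaalt\subseteq\wThetaa$, likewise matches the paper's remark immediately after the theorem.
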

\begin{proof}
The proof is almost identical to Theorem~\ref{thm}, if we replace the definition of $g$ with $g(z, \btheta) = \sum_{\bs\in\mathbb{S}} \mathbb{I}\left\{f(\bs | \btheta) \le z \right\} f(\bs | \btheta)$, $z\in[0,1]$. With this definition $g$ is a smoother function,
which explains intuitively why this construction will generally lead to more power.
\end{proof}

It is straightforward to see that $\wThetaalt \subseteq \wThetaa$ almost surely since 
$$
\sum_{\bs\in\mathbb{S}} \mathbb{I}\big\{f(\bs | \btheta) \le f(\sobs | \btheta) \big\} f(\bs | \btheta)
\le f(\sobs | \btheta) \sum_{\bs\in\mathbb{S}} \mathbb{I}\big\{f(\bs | \btheta) \le f(\sobs | \btheta) \big\}
= f(\sobs|\btheta) \nu(f(\sobs|\btheta), \btheta).
$$

Since both constructions are  valid in finite samples, the choice between $\wThetaa$ or $\wThetaalt$ should be mainly based on computational feasibility. 
The construction of $\wThetaa$ may be easier to compute 
in practice as it depends on a summary of 
the distribution $f(\bs | \btheta)$ through the level set function $\nu$, while the 
construction of $\wThetaalt$ requires full  knowledge of the entire distribution. If it is 
computationally feasible, however, $\wThetaalt$ should be preferred because 
it is contained in $\wThetaa$ with probability one, as argued above.
This leads to sharper inference.  See also the applications on serology studies in Section~\ref{sec:application} for more details, where the 
construction of $\wThetaalt$ is feasible.

\subsection{Concrete Procedure and Remarks}\label{sec:proc}
Theorems~\ref{thm} and~\ref{thm2} imply the following simple procedure to construct a 95\% confidence set:
\begin{center}
\large
\textsc{Procedure 1}
\end{center}
\vspace{-10px}
\begin{enumerate}
\item Observe the value $\sobs=(\scaln, \scalp, \smain)$ of test positives in the validation study and the main study. 
Create a grid $\Theta \subset [0, 1]^3$ for the unknown parameters $\btheta=(p, q, \pi)$.
\item For every $\btheta$ in $\Theta$, calculate $f(\sobs | \btheta)$ 
as in Equation~\eqref{eq:f}, and $\nu(\sobs, \btheta)$ as 
in Equation~\eqref{eq:nu}.
\item Reject all values $\btheta \in \Theta$ for which 
$f(\stobs | \btheta) \cdot \nn(f(\sobs | \btheta), \btheta) \le 0.05$; alternatively, we can reject based on 
$\sum_{\bs\in\mathbb{S}} \mathbb{I}\big\{f(\bs | \btheta) \le f(\sobs | \btheta) \big\} f(\bs | \btheta) \le 0.05$.

\item The remaining values in $\Theta$ form the 95\% confidence sets $\wThetac$ or 
$\wThetacalt$, respectively.
\end{enumerate}

\begin{remark}[Computation]
Procedure 1 is fully parallelizable 
over $\btheta$, and so the main computational difficultly is the need to sum over the sample space $\St$.
Note, however, that Procedure 1 can work for any choice of $\bS$ given its density $f(\bS | \btheta)$. 
Thus, our method offers valuable flexibility for inference; for instance, $f$ could be simulated, or $\bS$ 
could be a simple statistic~(e.g., sample averages) and not necessarily an ``$\arg\max$" estimator. See Section~\ref{sec:discussion} for more discussion on computation. 
\end{remark}

\begin{remark}[Identification]\label{remark:id}
Procedure 1 is not a typical partial identification method in the sense that 
there are settings under which the model of Equation~\eqref{eq:f} is point identified~(i.e., when $\NMain,\NCalNeg, \NCalPos\to\infty$).
However, we choose to describe Procedure 1 as a partial identification method for two 
main reasons. First, it is more plausible, in practice, that the calibration studies are small and finite~($\NCalNeg, \NCalPos <\infty$), since a calibration study needs to have 
high-quality, ground-truth data.
Second, it can happen that we don't have both kinds of calibration studies available~(i.e.,
 it could be that either 
$\NCalNeg=0$ or $\NCalPos=0$). 
In both of these settings, the underlying model is no longer point identified, and so Procedure 1 is technically a partial identification method.
\end{remark}

\begin{remark}[Conservativeness]\label{remark:cons}
Procedure 1 generally produces conservative confidence intervals.
However, we can show that $P(\btheta_0\notin\wThetaalt) \ge \alpha - \epsilon$, where 
$\epsilon = \max_{s\in \St} P\left\{ f(\bS | \btheta_0) = f(s| \btheta_0) \right\}$. 
This value is very small, in general~(e.g., $\epsilon \sim 10^{-3}$ in the 
Santa Clara study). 
In this case, the alternative construction is ``approximately exact" in the sense that the coverage probability of $\wThetaalt$ is almost equal to $(1-\alpha)$.
\end{remark}

\begin{remark}[Marginal inference]\label{remark:marginal}
The parameter of interest in our application could only be the disease prevalence, whereas the true/false positive rates of the antibody test may be considered ``nuisance". In this paper, we directly project $\wThetaa$~(or $\wThetaalt$)
on a single dimension to perform marginal inference~(see Section~\ref{sec:application}), but this is generally conservative, especially at the boundary of the parameter space~\citep{stoye2009more, kaido2019confidence, chen2018monte}.
A sharper way to do marginal inference with our procedures is an interesting direction for future work.
\end{remark}

\subsection{Comparison to other methods}\label{sec:compare1}
How does our method compare to a more standard  frequentist or Bayesian approach?
Here, we discuss two key differences.
First, as we have repeatedly emphasized in this paper, our method is valid in finite samples under only independence of test results, which is a mild assumption. In contrast, a standard frequentist approach, say based on the bootstrap, is inherently approximate and relies on asymptotics, while a Bayesian method requires the specification of priors and posterior sampling. 
Of course, our procedure requires more computation, mainly compared to the bootstrap, and can be conservative for marginal inference~(see Remark~\ref{remark:marginal}), but 
this is arguably a small price to pay in a critical application such as the estimation of Covid-19 prevalence.

A second, more subtle, difference is the way our method performs inference. Specifically, we decide whether any $\btheta\in\Theta$ is in the 
confidence set based on the entire density $f(\bs|\btheta)$ over all $\bs\in\St$, whereas
both frequentist and Bayesian methods typically perform inference ``around the mode" of the likelihood function $f(\sobs | \btheta)$ with fixed $\sobs\in\St$~(we ignore how the prior specification affects Bayesian inference to simplify exposition).
This can explain, on an intuitive level, how the inferences of the respective methods may differ. 
Figure~\ref{fig:compare} illustrates the difference. On the left panel, we 
plot the likelihood, $f(\sobs|\btheta)$, as a function of $\btheta\in\Theta$. 
Typically, in frequentist or Bayesian methods, the confidence set is around the mode, say $\hat\btheta$. We see that a parameter value, say $\btheta_1$, with a likelihood value,  $f(\sobs |\btheta_1)$, that is low in absolute terms will generally not be included in the confidence set.
However, in our approach, the 
 value $f(\sobs | \btheta_1)$ is not important in absolute terms for doing 
 inference, but is only important relative to all other values $\{f(\bs | \btheta_1) : \bs\in\St\}$ of the test statistic distribution $f(\bs|\btheta_1)$. 
 Such inference will typically include the mode, $\hat\btheta$, but will also 
 include parameter values at the tails of the likelihood function, such as $\btheta_1$. As such, our method is expected to give more accurate inference in small-sample problems, 
 or in settings with poor identifiability where the likelihood is non-smooth and multimodal. We argue that we actually see these effects  in the application on Covid-19 serology studies analyzed in the following section --- see also Section~\ref{sec:compare2} and Appendix~\ref{appendix:example} for concrete numerical examples.

\begin{figure}[t!]
\centering
\includegraphics[scale=0.25]{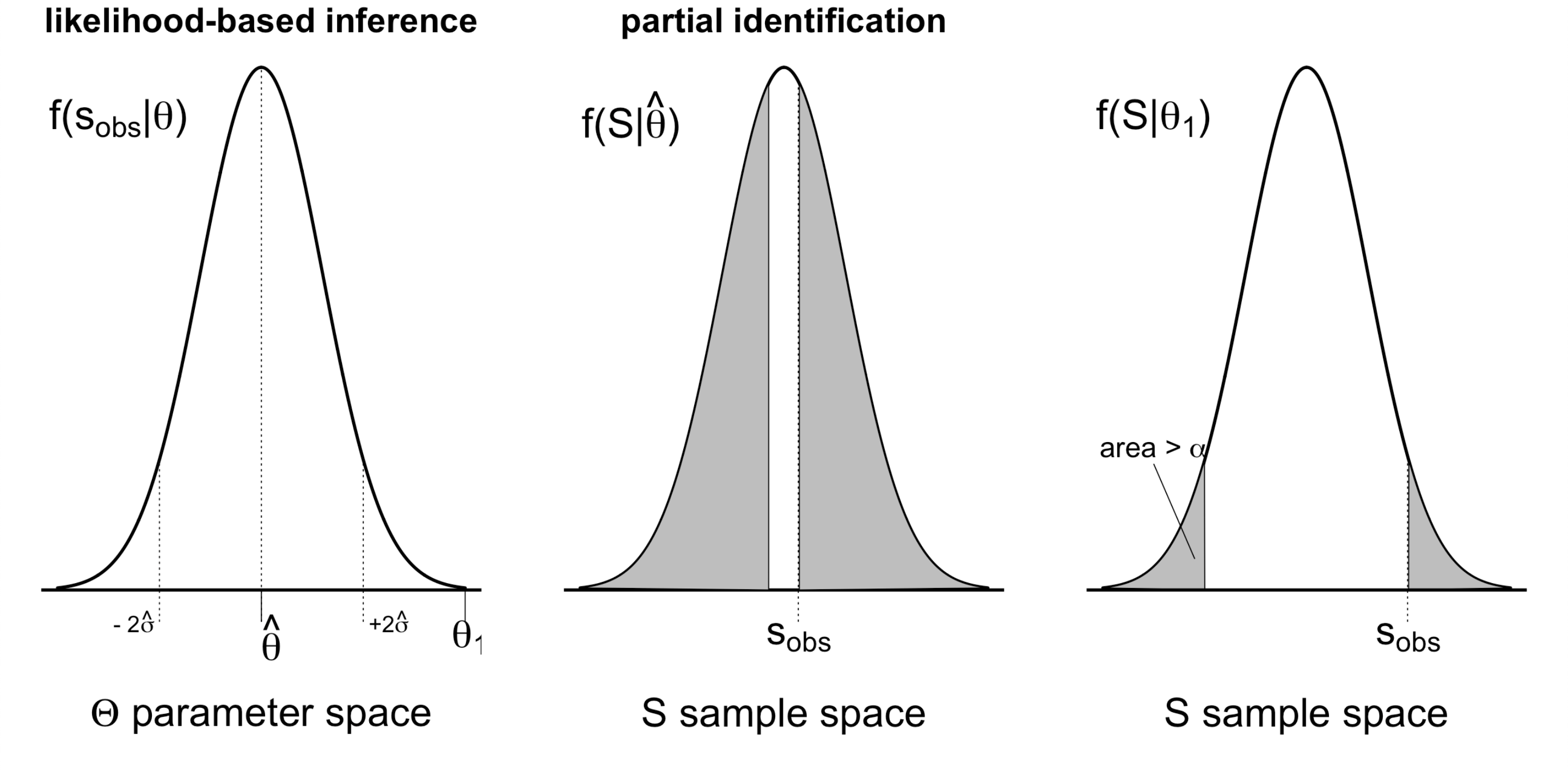}
\caption{\small Illustration of main difference between standard methods of inference, and the partial identification method in this paper.
{\em Left:}~In standard methods, inference is typically based on the likelihood 
$f(\sobs|\btheta)$ as a function over $\Theta$, and around some mode $\hat\btheta$. 
Parameter values with low likelihood, such as $\btheta_1$, 
are not included in the confidence set.~{\em Middle \& right:}~In our method, inference is based on the entire distribution function $f(\bs | \btheta)$ over the statistic parameter space, $\St$. As such, $\hat\btheta$ will usually be in the confidence set~(middle plot). Moreover, $\btheta_1$ will also be in the confidence set 
if $f(\sobs | \btheta_1)$ is high  relative to the rest of $f(\bs | \btheta_1)$ even when 
$f(\sobs|\btheta_1)$ is small relative to the mode $f(\sobs | \hat\btheta)$~(right plot).
}
\label{fig:compare}
\end{figure}

\section{Application}\label{sec:application}
In this section, we apply the inference procedure of Section~\ref{sec:proc} to several
serology test datasets in the US. Moreover, we present results for combinations of these datasets, assuming that the tests have identical specifications.
This is likely an untenable assumption, but it helps to illustrate how we can use our approach to flexibly combine all evidence.  Before we present the analysis, we first discuss some data on serology test performance to inform our inference.

\subsection{Serology test performance}\label{sec:serology}
An important aspect of serology studies are the test performance characteristics. 
As of May 2020, there are perhaps more than a hundred commercial serology tests in the US, but they can differ substantially across manufacturers and technologies.
In our application, we use data from~\citet{bendavid2020covid}, who 
applied a serology testing kit distributed by Premier Biotech. 
\citet{bendavid2020covid} used validation test results provided by the test manufacturer, and also performed 
a local validation study in the lab. The combined validation study estimated a true positive rate of 80.3\% (95\% CI: 72.1\%-87\%), and a false positive rate of 0.5\% (95\% CI: 0.1\%-1.7\%).

To get an idea about how these performance characteristics relate to other available serology tests we 
use a dataset published by the FDA based on benchmarking 12 other testing kits to grant emergency use authorization (EUA) status. 
The dataset is summarized in Table~\ref{tab:sero}.
 We see that the characteristics of the testing kit used by \citet{bendavid2020covid}  are compatible with the FDA data shown in the table.
 For example, a true positive rate of 80\% is below the mean and median of the point estimates in the FDA dataset.
A false positive rate of 0.5\% falls between the median and mean of the respective FDA point estimates.
A reason for this skewness is likely the existence of one outlier testing kit that performs notably worse than the others (Chembio Diagnostic Systems).  Removing this datapoint brings the mean false positive rate down to 0.6\%, very close to the estimate provided by~\citet{bendavid2020covid}.

 \renewcommand{\arraystretch}{1.2}
\begin{table}[h!]
\centering
\begin{tabular}{ll | cccc}
            Test characteristic         &                & min  & median & mean & max  \\
                                                                                   \hline
\multirow{3}{*}{true positive rate ($q$)}   & point estimate & 77.4\% & 91.1\%   & 90.6\% & 100\%  \\
                                                                                   & 95\% CI, low endpoint   & 60.2\% & 81\%     & 80.1\% & 95.8\% \\
                                                                                   & 95\% CI, high endpoint  & 88.5\% & 96.7\%   & 95.3\% & 100\%  \\
                                                                                   \hline
\multirow{3}{*}{false positive rate ($p$)} &     point estimate           & 0\%    & 0.35\%   & 1.07\% & 5.6\% \\
                                                                                   &    95\% CI, low endpoint            & 0\%    & 0.1\%    & 0.52\% & 2.7\% \\
                                                                                   &     95\% CI, high endpoint           & 0.3\%  & 1.6\%   & 3.18\% & 11.1\%
\end{tabular}
\caption{\small Performance characteristics of 12 different testing kits granted with emergency 
authorization status by the FDA. {\em Source:}~Author calculations based on publicly available FDA dataset at \url{https://www.fda.gov/medical-devices/emergency-situations-medical-devices/eua-authorized-serology-test-performance?mod=article_inline}.
}
\label{tab:sero}
\end{table}

\subsection{Santa Clara study}
In the Santa Clara study,~\citet{bendavid2020covid} report a validation study and 
main study, with $(\NCalNeg, \NCalPos, \NMain) = (401, 197, 3330)$ participants, respectively. 
The observed test positives are $\sobs = (\scaln, \scalp, \smain) = (2, 178, 50)$, respectively. Given these data, we produce the 95\% confidence sets  for $(p, q, \pi)$ following both procedures in \eqref{eq:Theta1} and \eqref{eq:Theta2} described in Section~\ref{sec:proc}. 
%
In Figure~\ref{fig:sc_Theta} of Appendix~\ref{appendix:serology}, we jointly plot all triples in the 3-dimensional space $\wTheta_{0.95}$ of  Equation~\eqref{eq:Theta1}, with additional coloring based on prevalence values.
We see that the confidence set is a convex space tilting to higher prevalence values as the false positive rate of the test decreases. The true positive rate does not affect prevalence, as long as it stays in the range 80\%-95\%.

To better visualize the pairwise relationships between the model parameters, we also provide Figure~\ref{fig:sc} that breaks down Figure~\ref{fig:sc_Theta} into two subplots, one visualizing the pairs $(\pi, p)$ and another visualizing the pairs $(\pi, q)$. 
The figure visualizes both $\wThetac$ and $\wThetacalt$ to illustrate the differences between the two 
constructions.
From Figure~\ref{fig:sc}, we see that the Santa Clara study is not conclusive about  Covid-19 prevalence. A prevalence of 0\% is plausible, 
given a high enough false positive rate.
However, if the true false positive rate is near its empirical value of  0.5\%, as estimated by~\citet{bendavid2020covid}, then the identified prevalence rate is estimated in the range 0.4\%-1.8\% in $\wThetac$. Under this assumption, we see that $\wThetacalt$ offers a sharper inference, as expected, with an estimated prevalence in the range 0.7\%-1.5\%.
Even though, strictly speaking, the statistical evidence is not  sufficient here for definite inference on prevalence, we tend to favor the latter interval because (i) common sense precludes 0\% prevalence in the Santa Clara county~(total pop. of about 2 million); (ii) the interval generally agrees with the test performance data presented earlier, and (iii) it is still in the low end compared 
to prevalence estimates from other serology studies~(see Table~\ref{tab1}).
Regardless, pinning down the false positive rate is important for estimating prevalence, especially when prevalence is as low as it appears to be in the Santa Clara study. Roughly speaking, a decrease of 1\% in the false positive rate implies an increase of 1.3\% in prevalence.

\begin{figure}[t!]
\vspace{-30px}
\centering
$\wThetac$ in Santa Clara study
\includegraphics[scale=0.38]{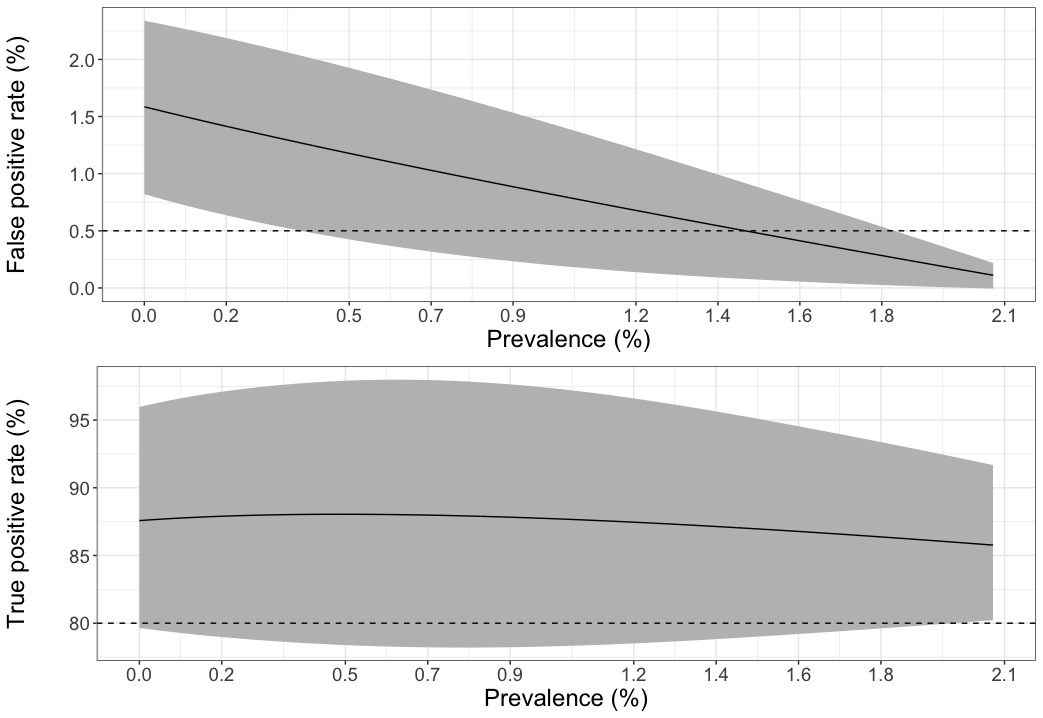}

\vspace{5px}
$\wThetacalt$ in Santa Clara study
\includegraphics[scale=0.38]{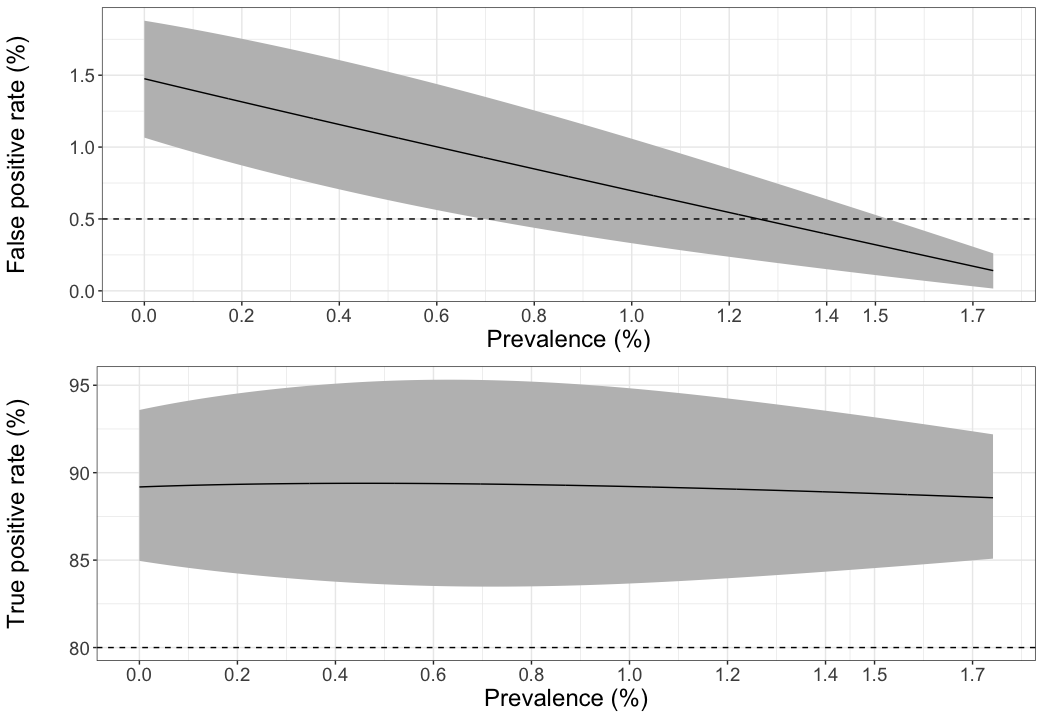}
\caption{
\small
Confidence sets $\wThetac$ and $\wThetacalt$ of the basic and the alternate 95\% confidence set construction in Equations~\eqref{eq:Theta1} and~\eqref{eq:Theta2}, respectively, for the Santa Clara study.  Each confidence set is broken down to pairwise relationships. All values in a shaded area are in the corresponding confidence set. The horizontal dashed line corresponds to the empirical estimate of the false positive rate~($\hat p = 0.5\%$) of the serology test; the dashed line corresponds to the estimate of its true positive rate ($\hat q = 80\%$);~see Section~\ref{sec:serology} for details.
}
\label{fig:sc}
\end{figure}

\subsection{Santa Clara study: Comparison to other methods}
\label{sec:compare2}
In this section, we aim to discuss how our method practically compares to more standard methods using data from the Santa Clara study. In our comparison we include 
Bayesian methods, a classical likelihood ratio-based test, and the Monte Carlo-based approach 
 to partial identification proposed by~\citet{chen2018monte}.\footnote{
All these methods are fully implemented in the accompanying code at \url{https://github.com/ptoulis/covid-19}.
 }

\subsubsection{Comparison to standard Bayesian and bootstrap-based methods}
Due to initial criticism, the authors of the original Santa Clara study published a revision of their work, where they use a bootstrap procedure to calculate 
confidence intervals for prevalence in the range 0.7\%-1.8\%.\footnote{Link: \url{https://www.medrxiv.org/content/10.1101/2020.04.14.20062463v2.full.pdf}.}
Some recent Bayesian analyses report wider prevalence
intervals in the range 0.3\%-2.1\%~\citep{gelman2020bayesian}. In another Bayesian multi-level analysis, \citet{levesque2020note} report similar findings but mention that posterior summarization here may be 
subtle, since the posterior density of prevalence in their specification includes 0\%.
These results are in agreement with our analysis in the previous section only if we assume 
that the true false positive rate of the serology test was near its empirical estimate~($\sim$0.5\%).
We discussed intuitively the reason for such discrepancy in Section~\ref{sec:compare1}, where we argued that standard methods typically do inference ``around the mode" of the likelihood, and may thus miscalculate the amount of statistical information hidden in the tails.


For a numerical illustration, consider two parameter values, namely
$\btheta_1 = (0.5\%, 90\%, 1.2\%)$ and $\btheta_2 = (1.5\%, 80\%, 0\%)$, 
where  the components denote the false positive rate, true positive rate, and prevalence, respectively. In the Santa Clara study, 
$f(\sobs | \btheta_1) = 2.2 \times 10^{-3}$ and $f(\sobs | \btheta_2) = 9.58 \times 10^{-8}$, that is, $\btheta_2$~(which implies 0\% prevalence) maps to a likelihood value that is many orders of magnitude smaller than $\btheta_1$.
In fact, $\btheta_1$ is close to the mode of the likelihood, 
and so frequentist or Bayesian inference is mostly based around that mode, ignoring the tails of the likelihood function, such as $\btheta_2$.
For our method, however, the small value of $f(\sobs | \btheta_2)$ is more-or-less irrelevant --- what matters is how this value compares to the entire 
distribution $f(\bs | \btheta_2)$. It turns out that 
$f(\sobs | \btheta_2) \nu(f(\sobs | \btheta_2), \btheta_2) = 0.137$, that is, 
13.7\% of the mass of $f(\bs|\btheta_2)$ is below the observed level $f(\sobs | \btheta_2) = 9.58 \times 10^{-8}$. As such, $\btheta_2$ cannot be rejected at the 5\% level~(see also Appendix~\ref{appendix:example}).
This highlights the key difference of our procedure compared to frequentist or Bayesian procedures. More generally, we expect to see such important differences between the inference from our method and the inference from other more standard methods in settings with small samples or poor identification~(e.g., non-separable, multimodal likelihood). 

\subsubsection{Comparison to likehood ratio test}\label{sec:LRT}
As briefly described in Section~\ref{sec:method_overview}, our test is related to the likelihood ratio test~\citep[Chapter 3]{lehmann2006testing}. Here, we study the similarities and differences between the two tests, both theoretically and empirically through the Santa Clara study. Specifically, consider testing a null hypothesis that the true parameter is equal to some value $\btheta$  using  the likelihood ratio statistic,
\begin{equation}\label{eq:LRT}
t(\sobs | \btheta) = \frac{f(\sobs | \btheta)}{\max_{\btheta'\neq\btheta} f(\sobs | \btheta')}.
\end{equation}

Since $f$ is known analytically from Equation~\eqref{eq:f}, the null distribution 
of $t(\bS | \btheta)$ can be fully simulated. An exact p-value can then be obtained 
by comparing this null distribution with the observed value $t(\sobs|\btheta)$.
We can see that this method is similar to ours in the sense that both methods 
use the full density $f(\bS|\btheta)$ in the test, and both are exact. The main difference, however, is that our method is using a 
summary of the density values $f(\bS | \btheta)$ that are below the observed value $f(\sobs | \btheta)$, 
which avoids the expensive~(and sometimes numerically unstable) maximization in the denominator of the likelihood ratio test in~\eqref{eq:LRT}. Our proposed method turns out to be orders of magnitude faster 
than the likelihood ratio approach as we get $50$-fold to $200$-fold speedups in our setup --- see Section~\ref{sec:discuss_computation} for a more detailed comparison in computational efficiency.

To efficiently compare the inference between the two tests, we sampled 
5,000 different parameter values from inside $\wThetac$ --- i.e., the 95\% confidence set from the basic 
test in Equation~\eqref{eq:Theta1} --- and 5,000 parameter values from $\Theta\setminus \wThetac$, and then calculated the overlap between the test decisions.
The likelihood ratio test rejected 3\% of the values from the first set, and 
98\% of the values from the second set, indicating a good amount of overlap between the two tests. 
The correlation between the p-value from the likelihood ratio test, and the 
 values $f(\sobs | \btheta) \nu(f(\sobs | \btheta), \btheta)$, which our basic test
uses to make a decision in Equation~\eqref{eq:Theta1}, was equal to 0.94.
The correlation with the alternative confidence set construction is 0.90, 
using instead the values $\mathbb{I}\big\{f(\bs | \btheta) \le f(\sobs | \btheta) \big\} f(\bs | \btheta)$ 
in the above calculation.
Since the likelihood ratio test is exact, these results suggest that our test procedures are generally high-powered.
%

In Figures~\ref{fig:LRT_1} and~\ref{fig:LRT_2}  of Appendix~\ref{appendix:LRT}, we 
plot the 95\% confidence sets from the likelihood ratio test described above for the Santa Clara study and the LA county study~(of the following section). 
The estimated prevalence is 0\%-1.9\% for Santa Clara, which 
is shorter than $\wThetac$ but wider than $\wThetacalt$, as reported earlier; the same holds for LA county. 
As with our method, prevalence here is estimated through direct projection of the confidence set, which may be conservative.
 It is also possible that with more samples the likelihood ratio test could achieve the same interval as $\wThetacalt$~(we used only 100 samples), but this would come at an increased computational cost. 
 Overall, the likelihood ratio test produces very similar results to our method, but it is not as efficient computationally.

\subsubsection{Comparison to Monte Carlo confidence set method 
of~\citet{chen2018monte}}\label{sec:mcmc}
In recent work, ~\citet{chen2018monte} proposed a Monte Carlo-based method of inference 
in partially identified models. The idea is to sample from a quasi-posterior distribution, and then calculate $q_n$, the 95\% percentile of 
$\{ f(\sobs | \btheta^{(j)}),~j=1, \ldots\}$, where $\btheta^{(j)}$ denotes the $j$-th sample from the posterior.
The 95\% confidence set is then defined as: 
\begin{equation}\label{eq:mcmc}
\wTheta = \{\btheta\in\Theta: f(\sobs | \btheta) \ge q_n\}.
\vspace{-10px}
\end{equation}

We implemented this procedure with an MCMC chain that appears to be mixing well --- see Appendix~\ref{appendix:MCMC} and Figure~\ref{fig:mcmc2} for details. 
The 95\% confidence set, $\wTheta$,  is given in Figure~\ref{fig:mcmc} of Appendix~\ref{appendix:MCMC}. 
Simple projection, yields a prevalence in the range 0.9\%-1.43\%. This suggests that our MCMC ``spends more time" around the mode of the likelihood, which we back up 
with numerical evidence in Appendix~\ref{appendix:MCMC}.
Finally, we also tried Procedure 3 of~\citet{chen2018monte}, which does not require MCMC simulations but is generally more conservative. Prevalence was estimated in the range~0.12\%-1.65\%, 
which is comparable to our method and the likelihood ratio test.

\subsection{LA County study}
Next, we analyze the results from a recent serology study in  Los Angeles county, which estimated a prevalence of 4.1\% over the entire county population.\footnote{\url{http://publichealth.lacounty.gov/phcommon/public/media/mediapubhpdetail.cfm?prid=2328}} 
We use the same validation study as before since this study was executed by the same team as the Santa Clara one.
Here, the main study had $\NMain=846$ participants with $\smain=35$ positives.\footnote{
This number was not reported in the official study announcement mentioned above. It was reported in a Science article referencing one of the authors of the study: \url{https://www.sciencemag.org/news/2020/04/antibody-surveys-suggesting-vast-undercount-coronavirus-infections-may-be-unreliable}.}  
For inference, we only use the alternative construction, $\wThetacalt$, 
of Equation~\eqref{eq:Theta2} to simplify exposition. The results are shown in Figure~\ref{fig:LA}.

\begin{figure}[t!]
\centering
\vspace{-10px}
$\wThetacalt$ in LA county study\\

\includegraphics[scale=0.45]{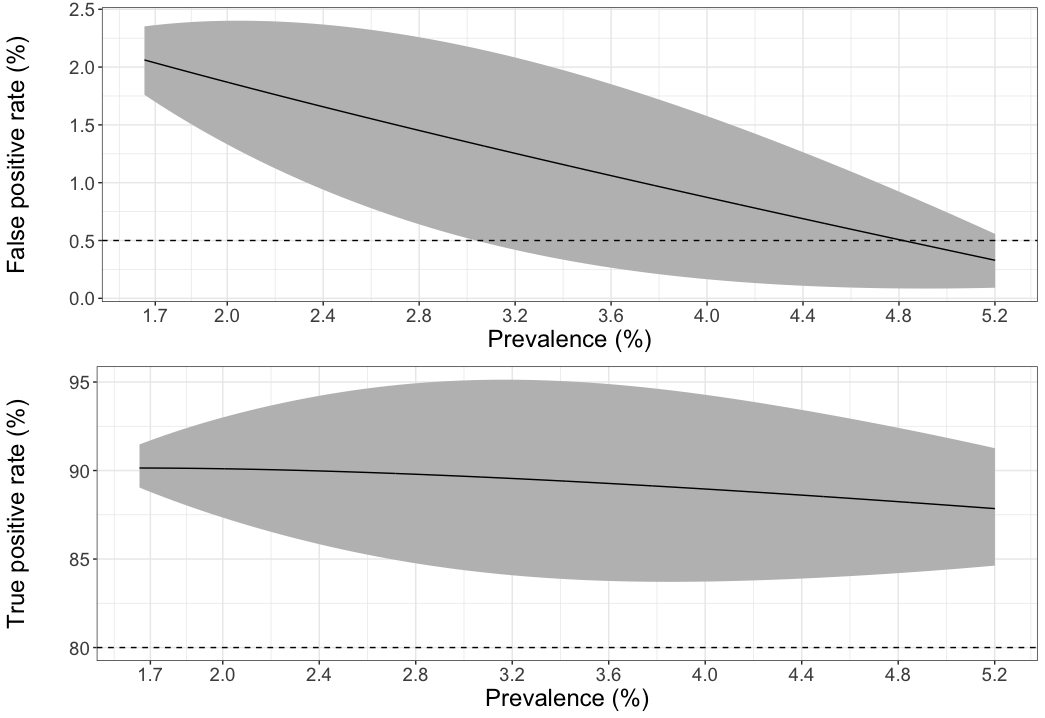}

\caption{ Visualization of $\wThetacalt$ for the LA county study. 
We see that the evidence for Covid-19 prevalence are stronger than the Santa Clara study. 
Prevalence is estimated in the range 1.7\%-5.2\%. 
This is shortened to 3\%-5.2\% if we assume a 0.5\% false positive rate for the antibody test.}
\label{fig:LA}
\end{figure}

In contrast to the Santa Clara study, 
we see that the results from this study are conclusive. 
The prevalence rate is estimated in the range 1.7\%-5.2\%. If the false positive rate is, for example, closer to its empirical estimate~(0.5\%) 
then the identified prevalence is relatively high, somewhere in the range 3\%-5.2\%. 
We also see that the true positive rate is 
estimated in the range 85\%-95\%, which is higher than the empirical point estimate of 80\% provided by~\citet{bendavid2020covid}. 
In fact, the empirical point estimate is not even in the 95\% confidence set.
Finally,  as an illustration, we combine the data from the Santa Clara and LA county studies.
The assumption is that the characteristics of the tests used in both studies were identical.
%
The results are shown in Figure~\ref{fig:SC+LA} of Appendix~\ref{appendix:serology}.
We see that 0\% prevalence is consistent with the combined study as well. 
Furthermore, prevalence values higher than 2.5\% do not seem plausible in the combined data.

\subsection{New York study}
Recently, a quasi-randomized study was conducted in New York state, including NYC, 
which sampled individuals shopping in grocery stores.
Details about this study were not made available. 
Here, we assume that the medical testing technology used was the same
as in the Santa Clara and LA county studies, or at least similar enough that the comparison remains informative. 

%
Under this assumption, we can use the same validation study as before,
with $(\NCalNeg, \NCalPos) = (401, 197)$ participants in the validation study, 
and $(\scaln, \scalp)=(2, 178)$ positives, respectively.
The main study in New York had $\NMain=3000$ participants with $\smain=420$ 
observed test positives.\footnote{
\url{https://www.nytimes.com/2020/04/23/nyregion/coronavirus-antibodies-test-ny.html}}  
The $\wThetacalt$ confidence set on this dataset is shown in Figure~\ref{fig:NY}.
We see that the evidence in this study is much stronger than the Santa Clara/LA county studies 
with an estimated prevalence in the range 12.9\%-16.6\%.
The true positive rate is now an 
 important identifying parameter in the sense that knowing its true value could narrow down the confidence set even further.

\begin{figure}[t!]
\centering
$\wThetacalt$ in New York study\\

\includegraphics[scale=0.45]{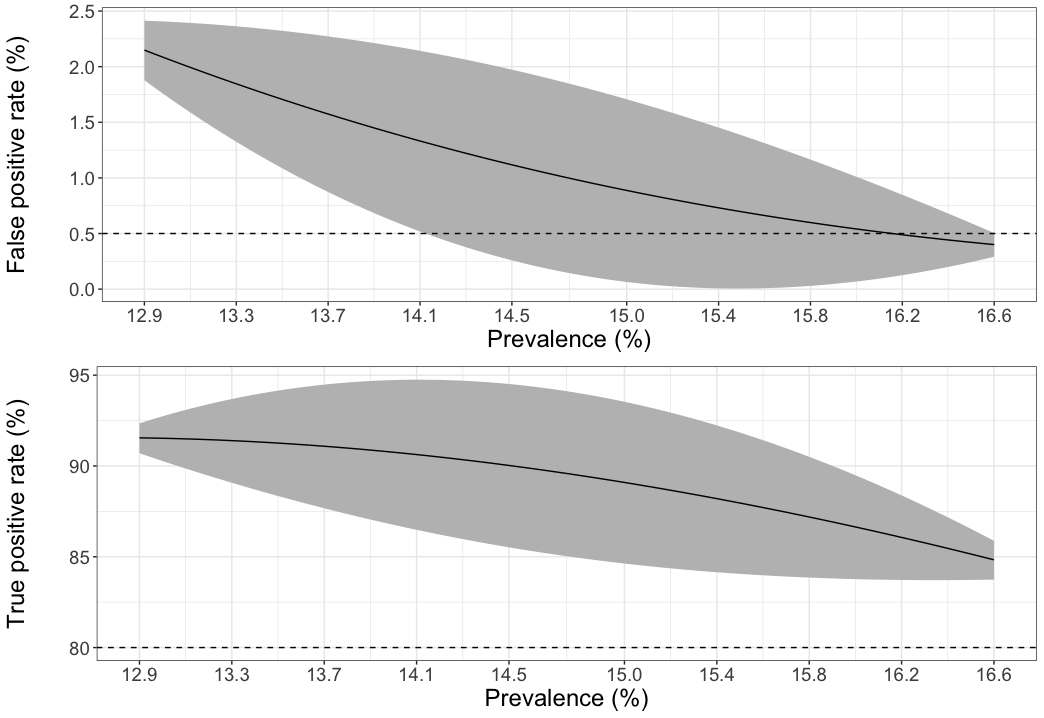}
\caption{Visualization of confidence set, $\wThetacalt$, for New York State study.
We see that this study gives strong evidence for prevalence in the range 13\%-16.6\%.
 }
\label{fig:NY}
\end{figure}

Finally, in Figure~\ref{fig:SC+LA+NY} of Appendix~\ref{appendix:serology} 
we present prevalence estimates for a combination of all datasets presented so far, 
while using both constructions, $\wThetac$ and $\wThetacalt$, to illustrate their differences.
As mentioned earlier, this requires the assumption that the antibody testing kits used 
in all three studies had identical specifications, or at least very similar so that the comparison remains informative. This assumption is most likely untenable given the available knowledge. However, we present the results there for illustration and completeness. 
The general picture in the combined study  is a juxtaposition of earlier findings. For example, 
both false and positive rates are now important for identification. The identified prevalence is 
in the range 5.2\%-8.2\% in $\wThetacalt$ (and 3.2\%-8.9\% in $\wThetac$).
These numbers are larger than the Santa Clara/LA county studies but smaller than the New York study.

\section{Discussion}\label{sec:discussion}

\subsection{Computation}\label{sec:discuss_computation}
The procedure described in Section~\ref{sec:proc} is computationally intensive for two main reasons. First, we need to consider all values of $\btheta\in\Theta$, which is a three-dimensional grid. Second, given some $\btheta$, we need to calculate $f(\bs | \btheta)$ for each $\bs\in\mathbb{S}$, which is also a three-dimensional grid. 

To deal with the first problem we can use parallelization, since the test decisions
in step 3 of our procedure are independent of each other. 
For instance, the results in Section~\ref{sec:application} were obtained in a computing cluster (managed by Slurm) comprised of 500 nodes, each with x86 architecture, 64-bit processors, and 16GB of memory. The total wall clock time to produce all results of the previous section was about 1 hour. The results for, say, the Santa Clara study can be obtained in much shorter time (a few minutes) because they contain few positive test results.
To address the second computational bottleneck we can  exploit the independence property between $\SCalNeg, \SCalPos$, 
and $\SMain$, as shown in the product of Equation~\eqref{eq:f}. 
Since any zero term in this product implies a zero value for $f$, we can ignore 
all individual term values that are very small. Through numerical experiments, we estimate that this computational trick prunes on average 97\% of $\St$ leading to a significant computational speedup. For example, 
to test one single value $\btheta\in\Theta$ takes about 0.25 seconds in a typical high-end laptop, which is a 200-fold speedup compared to 50 seconds required by the likelihood ratio test of Section~\ref{sec:LRT} --- see Appendix~\ref{appendix:LRT} for more details.

\subsection{Extrapolation to general population}\label{sec:bias}
As mentioned earlier, prevalence $\pi$ in Equation~\eqref{eq:pi} is a finite-population estimand, that is, it is a number that refers to the particular population in the study. 
Theorem~\ref{thm} shows that our procedure is valid for $\pi$ only under Assumption (A4).
However, to extrapolate to the general population we generally need to assume that
\begin{equation}\label{eq:A3}
\CalNeg, \CalPos, \Main~\text{are random samples from the population.}~\tag{A3}
\end{equation}
This is currently an untenable assumption. For example, in the Santa Clara study
the population of middle-aged white women was overrepresented, while the population of 
Asian or Latino communities was underrepresented. The impact from such selection bias on the inferential task is very hard to ascertain in the available studies.
Techniques such as post-stratification or reweighing can help, but at this early stage any extrapolation using distributional assumptions would be too speculative.
However, selection bias is a well-known issue among researchers, and can be addressed as 
widespread and carefully designed antibody testing catches on.
We leave this for future work.

\section{Concluding Remarks}
In this paper, we presented a partial identification method for estimating prevalence of Covid-19 from randomized serology studies. The benefit of our method is that it is valid in finite samples, as it does not rely on asymptotics, approximations or normality assumptions. We show that some recent serology studies in the US are not conclusive~(0\% prevalence is in the 95\% confidence set). 
However, the New York study gives strong evidence for high prevalence in the range
12.9\%-16.6\%. A combination of all datasets shifts this range down to 5.2\%-8.2\%, under a test uniformity assumption.
Looking ahead, we hope that the method developed here can contribute to a more robust analysis of future Covid-19 serology tests.

\section{Acknowledgments}
I would like to thank Guanglei Hong, Ali Hortascu, Chuck Manski, Casey Mulligan, 
Joerg Stoye, and Harald Uhlig for useful suggestions and feedback.
Special thanks to Connor Dowd for his suggestion of the alternative construction~\eqref{eq:Theta2}, 
and to Elie Tamer for various important suggestions. 
Finally, I gratefully acknowledge support from the John E. Jeuck Fellowship at Booth School of Business.

\small
\bibliographystyle{ims}
\bibliography{refs}

\newpage
\appendix

\normalsize
\section*{Appendix}

\newcommand{\logit}{\mathrm{logit}}
\newcommand{\eps}{\epsilon}
\newcommand{\bpsi}{\boldsymbol{\psi}}

\section{More details on the likelihood ratio test}\label{appendix:LRT}
The concrete testing procedure for the likelihood ratio test of Section~\ref{sec:LRT} is as follows.
\begin{enumerate}
\item Define the test statistic:
$$
t(\bs | \btheta) = \frac{f(\bs | \btheta)}{\max_{\btheta'\neq\btheta} f(\bs | \btheta')}.
$$
\item Calculate the observed value $t_\obs= t(\sobs | \btheta)$.
\item Sample $\{\bs^{(j)}, j=1, \ldots, r\}$ from $f(\bS | \btheta)$~(we set $r=1,000$ samples).
\item Calculate the one-sided p-value as
$
\frac{1}{r}\sum_{j=1}^r \mathbb{I}\{  t(\bs^{(j)} | \btheta) \ge t_\obs \}.
$
\end{enumerate}

For the maximization in the denominator of $t(\bs | \btheta)$ we use 
the standard BFGS algorithm on a natural re-parameterization. 
Specifically, we define the natural parameters as follows:
$$
\psi_0 = \logit(p); \psi_1 = \logit(q); \psi_2 = \logit(\pi),
$$
where $(p, q, \pi)\equiv \btheta$ are the original model parameters, i.e., 
false positive rate, true positive rate, and prevalence, respectively; 
and $\logit(z) = \log(z / (1-z)), z\in(0, 1)$. To avoid numerical instabilities 
we define $\logit(0) = \log(\eps / (1-\eps))$ and $\logit(1) = \log( (1-\eps) / \eps)$, where $\epsilon$ is a small constant, 
e.g., $\eps=$1e-8.
Since the natural parameter $ \bpsi = (\psi_0, \psi_1, \psi_2)$ is unconstrained, 
the optimization routine becomes faster and easier; 
mapping back from $\boldsymbol \psi$ to $\btheta$ is also straightforward.

The maximization takes about 0.05 seconds of wall-clock time in a typical high-end laptop.\footnote{
CPU: Intel(R) Core(TM) i7-8559U CPU at 2.70GHz; Memory: 16 GB 2133 MHz LPDDR3.}
It therefore takes a total of 50 seconds to test one single hypothesis based on  1,000 samples of the likelihood ratio. In contrast, our partial identification method takes 0.25 seconds of wall-clock time 
to test the same single null hypothesis, a 200-fold speedup. 
As explained in Section~\ref{sec:discuss_computation} this is because the 
computation of $f(\bs|\btheta)$ can be done very efficiently due to the 
decomposition of $f$ into three independent terms in Equation~\eqref{eq:f}.

Since the likelihood ratio test cannot be fully implemented, we chose to sample randomly 5,000 
parameter values from the basic confidence set, $\wThetac$, and 5,000 values from $\Theta \setminus\wThetac$, and then 
test each value using the likelihood ratio test. The idea is to explore the agreement of the two tests. The overlap between the likelihood ratio test decisions 
and the basic construction is 97.3\% for the values from $\wThetac$, and 97.7\% for the 
values from $\Theta \setminus\wThetac$. There was even more agreement with the alternative construction, specifically 97.3\% and 99.6\%, respectively.
Figure~\ref{fig:LRT} also shows some more detailed results. 
The x-axis represents the p-value generated in step 4 of the likelihood ratio test. 
The y-axis represents the ``strength of evidence" calculated by our procedure. 
For the basic construction, $\wThetac$, of Equation~\eqref{eq:Theta1} 
the strength of evidence is defined as $f(\sobs|\btheta) \nu(f(\sobs | \btheta), \btheta)$ --- 
the larger this value is, the stronger we reject. 
For the alternative construction, $\wThetacalt$, of Equation~\eqref{eq:Theta2} 
the value is defined as $\sum_{\bs\in\St} \mathbb{I}\{f(\bs | \btheta) \le f(\sobs|\btheta)\}
f(\bs | \btheta)$. We see high correlation between the different tests~(0.94 and 0.90, respectively).

\begin{figure}[h!]
\centering
\includegraphics[scale=0.45]{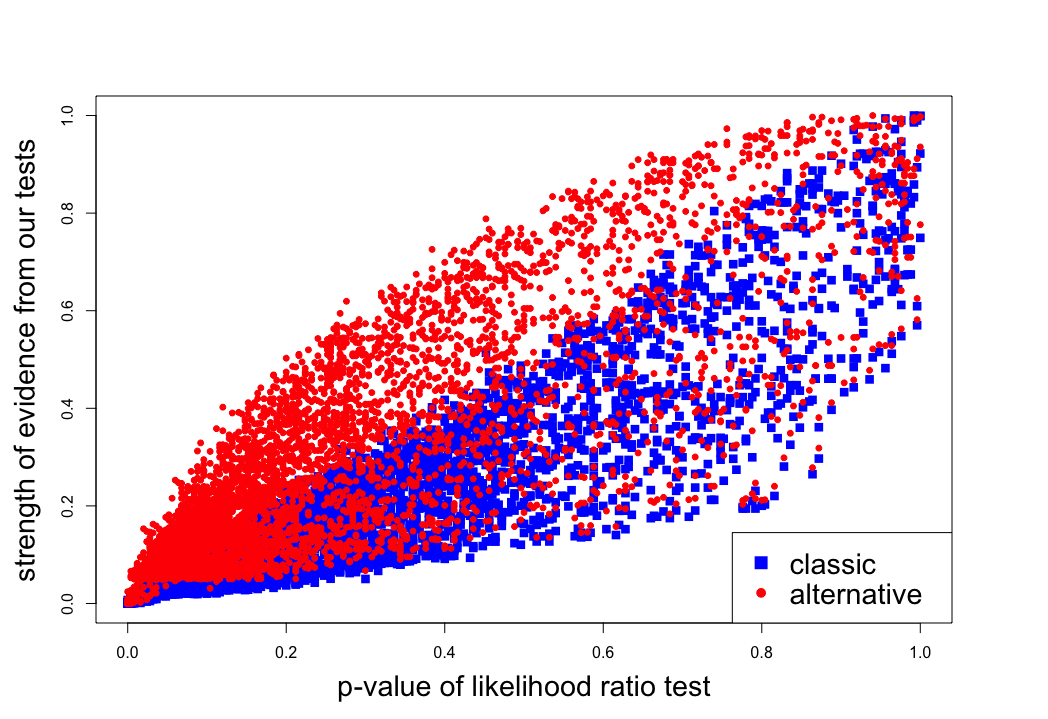}
\caption{Relationship between likelihood ratio test and our testing procedures, 
namely the ``classic" construction of Equation~\eqref{eq:Theta1}, and the 
alternative construction in Equation~\eqref{eq:Theta2}. }
\label{fig:LRT}
\end{figure}

\clearpage

\begin{figure}[h!]
\centering
Santa Clara study\\

\includegraphics[scale=0.36]{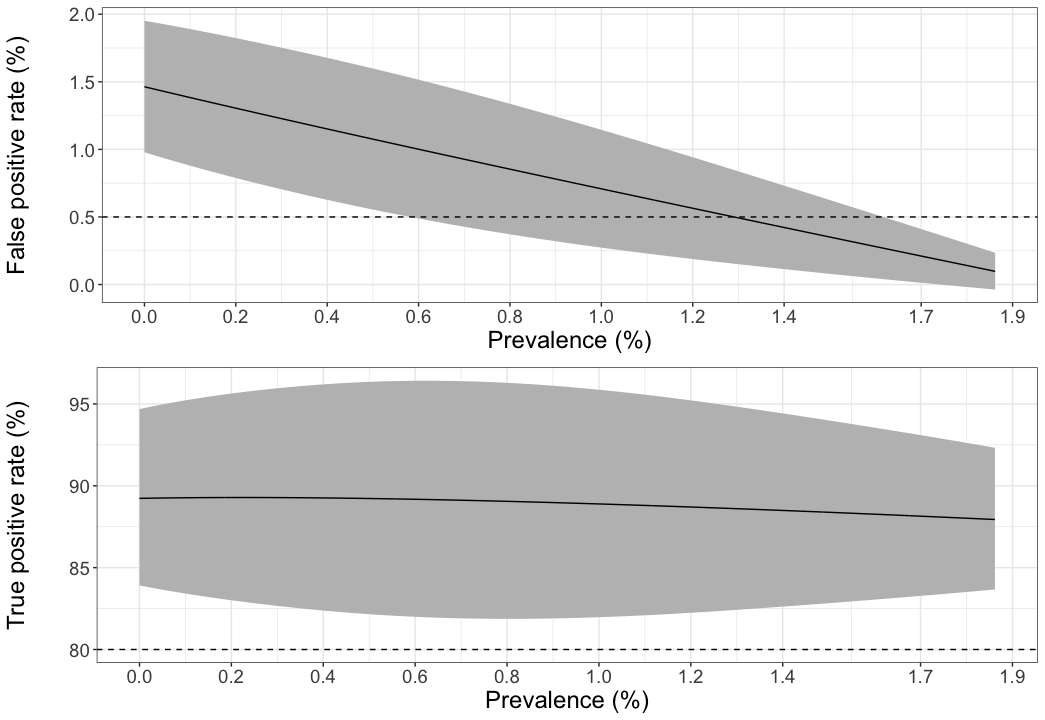}
\end{figure}

\begin{figure}[h!]
\centering
\vspace{5px}
LA county study \\

\includegraphics[scale=0.36]{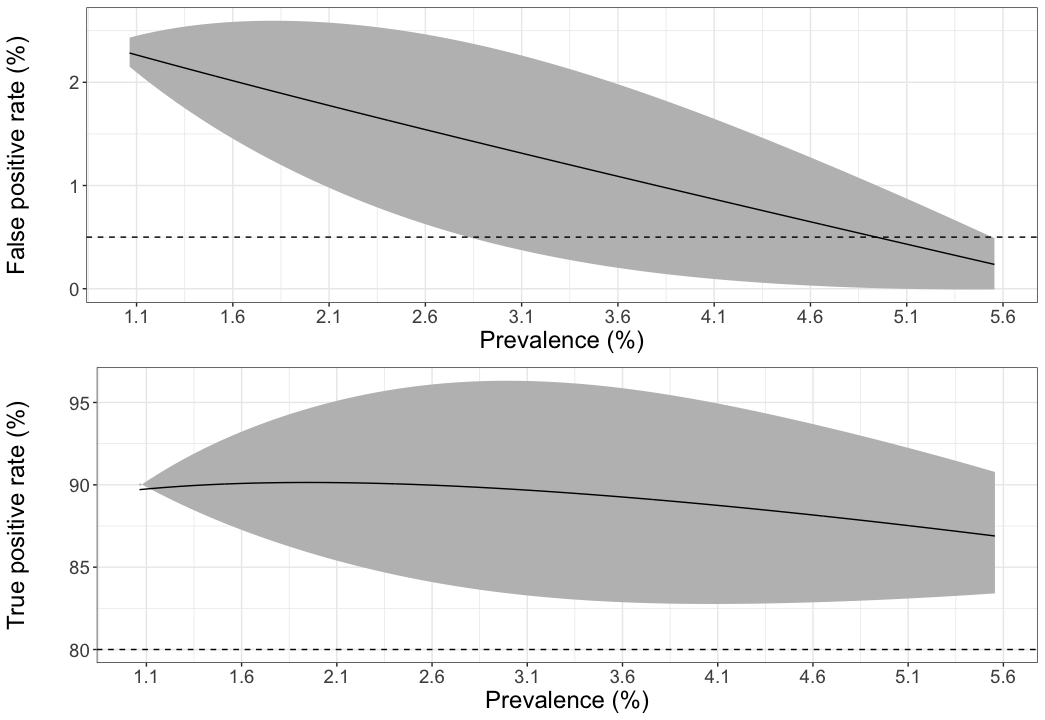}
\caption{95\% confidence set from likelihood ratio test-based procedure 
of Section~\ref{sec:LRT} for Santa Clara study (top) and LA county study (bottom).
}
\label{fig:LRT_1}
\end{figure}

\clearpage

\begin{figure}[h!]
\centering
\vspace{5px}
Santa Clara study and LA county study, combined \\

\includegraphics[scale=0.45]{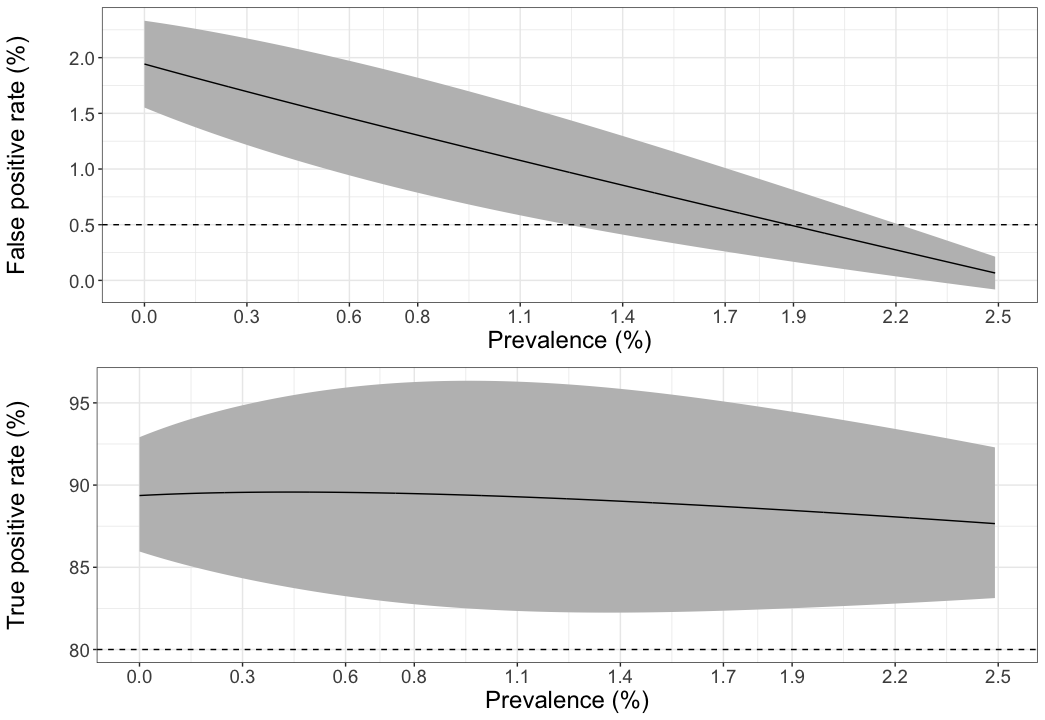}
\caption{95\% confidence set from likelihood ratio test-based procedure 
of Section~\ref{sec:LRT} for Santa Clara study and LA county study combined.
}
\label{fig:LRT_2}
\end{figure}

\clearpage

\section{More details on Monte Carlo confidence sets}\label{appendix:MCMC}
To implement the method of~\citet{chen2018monte} we used the following procedure.
\begin{enumerate}
\item We defined the natural parameter, $\boldsymbol{\psi}$ as in the previous section, 
and imposed a uniform prior: $f(\psi) \propto 1$.
\item We implement the Metropolis-Hastings algorithm through a symmetric proposal distribution, 
$q(\bpsi' | \bpsi) \sim N(\widehat\bpsi, \sigma^2 I)$, where 
$\widehat\bpsi$ is the maximum-likelihood estimate.
\item We run the Metropolis-Hastings for 200,000 iterations
and got samples from the posterior distribution 
$f(\bpsi | \sobs) \propto f(\sobs | \bpsi) f(\bpsi)$. We discarded the first 20\% of the posterior samples. Figure~\ref{fig:mcmc2} shows that the MCMC chain appears to be mixing well.
\item We transformed the $\bpsi$ samples back as $\btheta$ samples, and 
calculated $q_n$, the 95\% percentile of 
$\{ f(\sobs | \btheta^{(j)}),~j=1, \ldots\}$, where $\btheta^{(j)}$ denotes the $j$-th sample.
The 95\% confidence set is then defined as: 
$$
\wTheta = \{\btheta\in\Theta: f(\sobs | \btheta) \ge q_n\}.
$$

\end{enumerate}

 The 95\% posterior credible interval (shown in the bottom panel of Figure~\ref{fig:mcmc2}) was 
0.27\%-1.56\%, which is similar but slightly more narrow than the intervals from the Bayesian analyses 
described in Section~\ref{sec:compare2}.
The 95\% confidence set, $\wTheta$, defined above is visualized in Figure~\ref{fig:mcmc}.
Simple projection, yields a prevalence in the range 0.9\%-1.43\%. We note that 0.9\% corresponds to 32 true positives out of total 50 positives in the Santa Clara study. 
We can see from Figure~\ref{fig:mcmc2} that this number corresponds to the mode of the posterior marginal distribution for prevalence. Given the symmetry of this posterior distribution around the mode, 
it is surprising that the low-end of the confidence set corresponds to 0.9\% prevalence (i.e, 32 true positives out of 3,330 tests).
We can explain this discrepancy numerically by checking that values higher than 32~(prevalence higher than 0.9\%), that is, values on the right-end of the marginal posterior distribution, generally map to (much) higher likelihood values compared to values on the left-end. 
So, even though the marginal posterior looks symmetric, the likelihood values are not.
Since, the confidence set $\wTheta$ defined above is based on the likelihood values, 
it will be ``skewed" towards higher values of $\btheta$.

\begin{figure}[t!]
\centering
\includegraphics[scale=0.5]{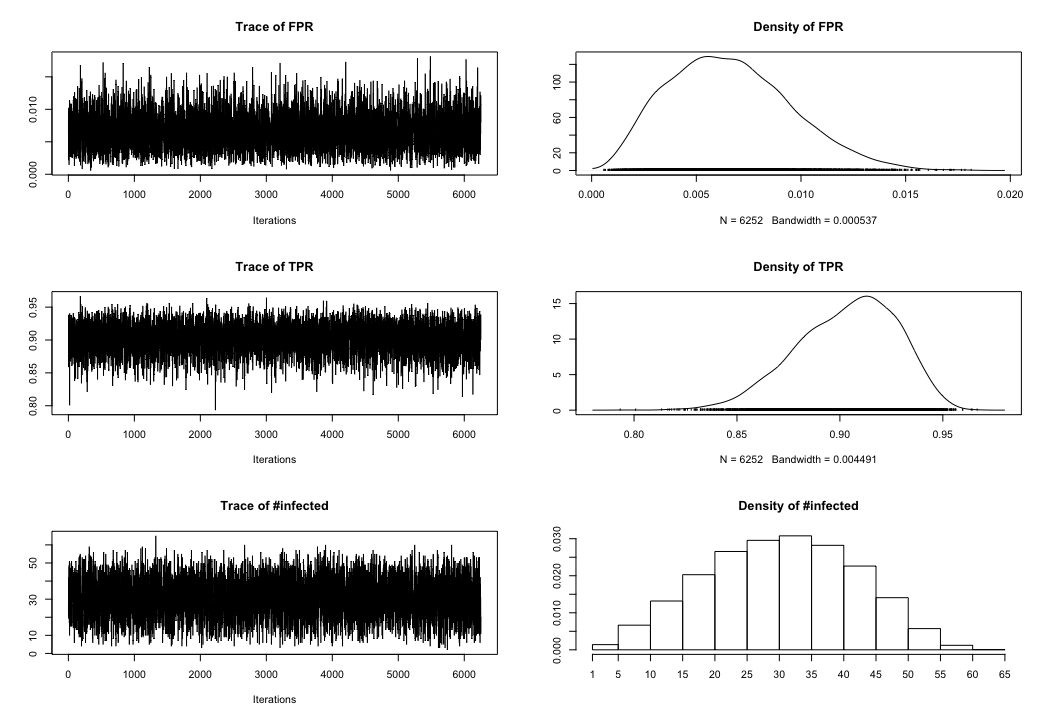}
\caption{Mixing plots for the MCMC chain used 
to construct the 95\% confidence set described in Section~\ref{sec:mcmc}.}
\label{fig:mcmc2}
\end{figure}

\clearpage 

\begin{figure}[t!]
\centering
\includegraphics[scale=0.45]{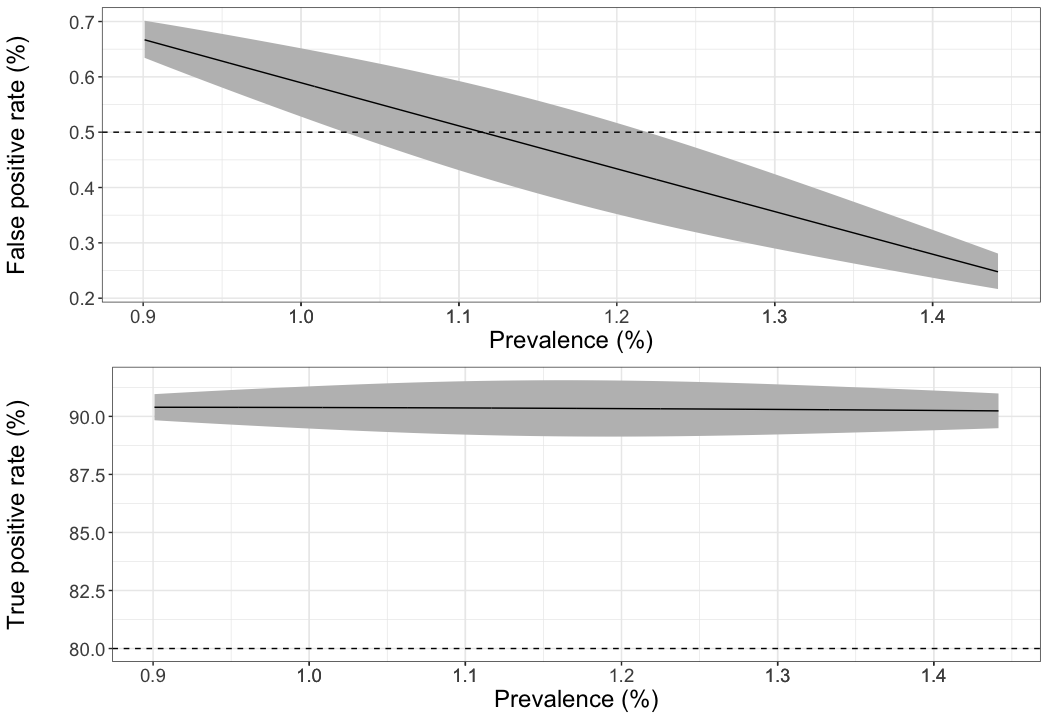}
\caption{The MCMC-based 95\% confidence set, $\wTheta$, produced by Procedure 1 of~\citet{chen2018monte}.}
\label{fig:mcmc}
\vspace{100px}
\end{figure}

\clearpage

\section{Additional results from serology studies in the US}
\label{appendix:serology}
Here, we present additional results from our analysis on serology studies from the US.
In particular, we combine datasets from various studies and analyze the results.
This requires the assumption that the antibody testing kits used 
in all three studies had identical specifications, or at least very similar so that the comparison remains informative. 

\begin{figure}[h!]
\vspace{40px}
\centering
\includegraphics[scale=0.38]{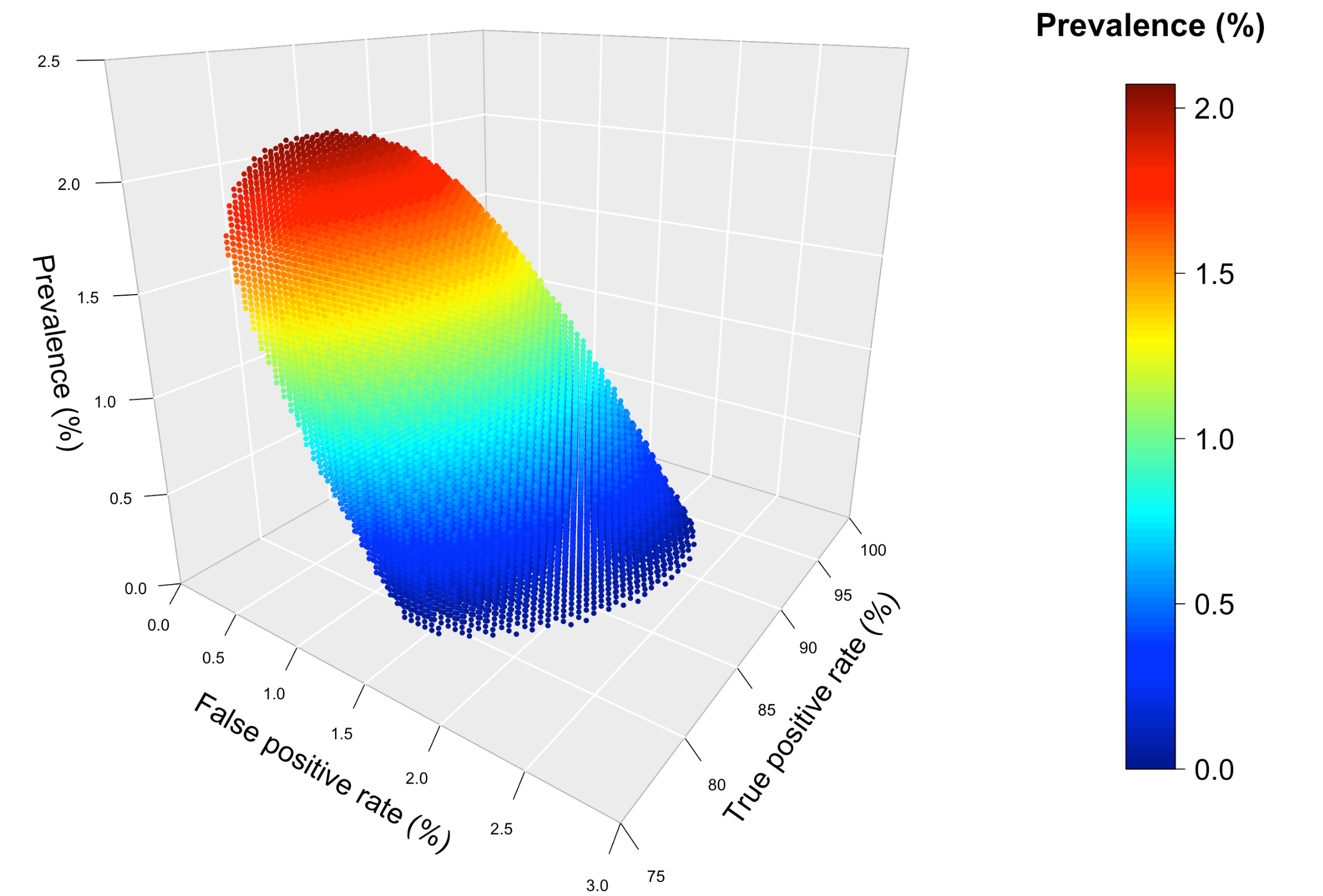}
\caption{Confidence set, $\wTheta_{0.95}$, for Santa Clara study, comprised of triples $\btheta=(p, q, \pi)$.
}
\label{fig:sc_Theta}
\end{figure}

\clearpage 

\begin{figure}[h!]
\centering
$\wThetac$ in Santa Clara \& LA county studies, combined
\includegraphics[scale=0.4]{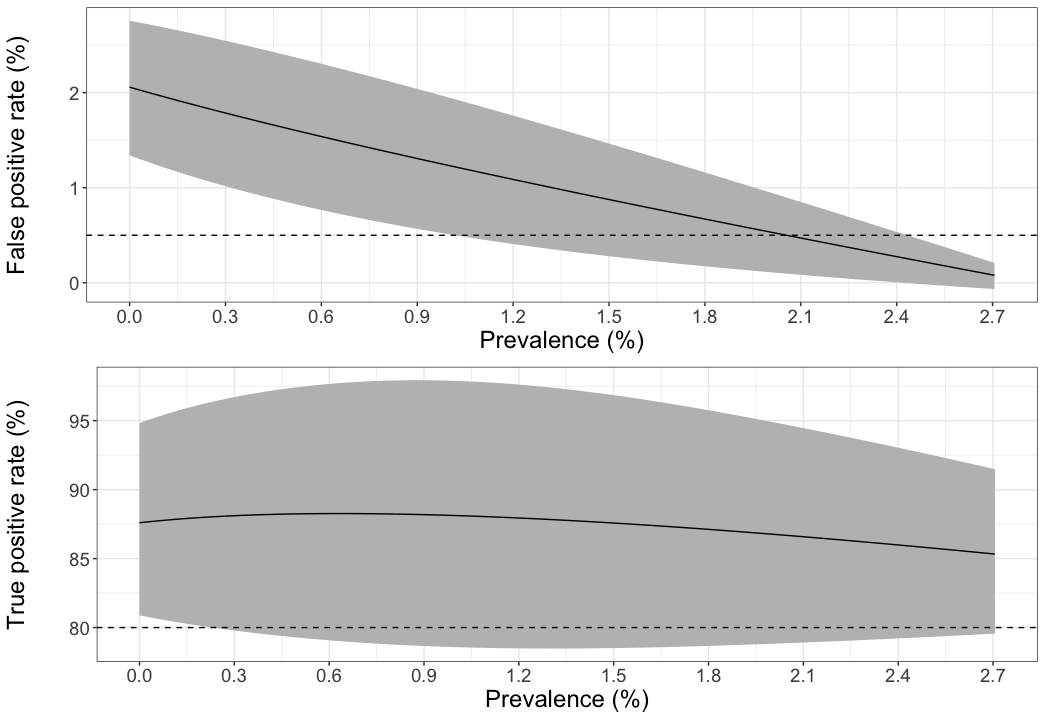}

\vspace{5px}
$\wThetacalt$ in Santa Clara \& LA county studies, combined
\includegraphics[scale=0.4]{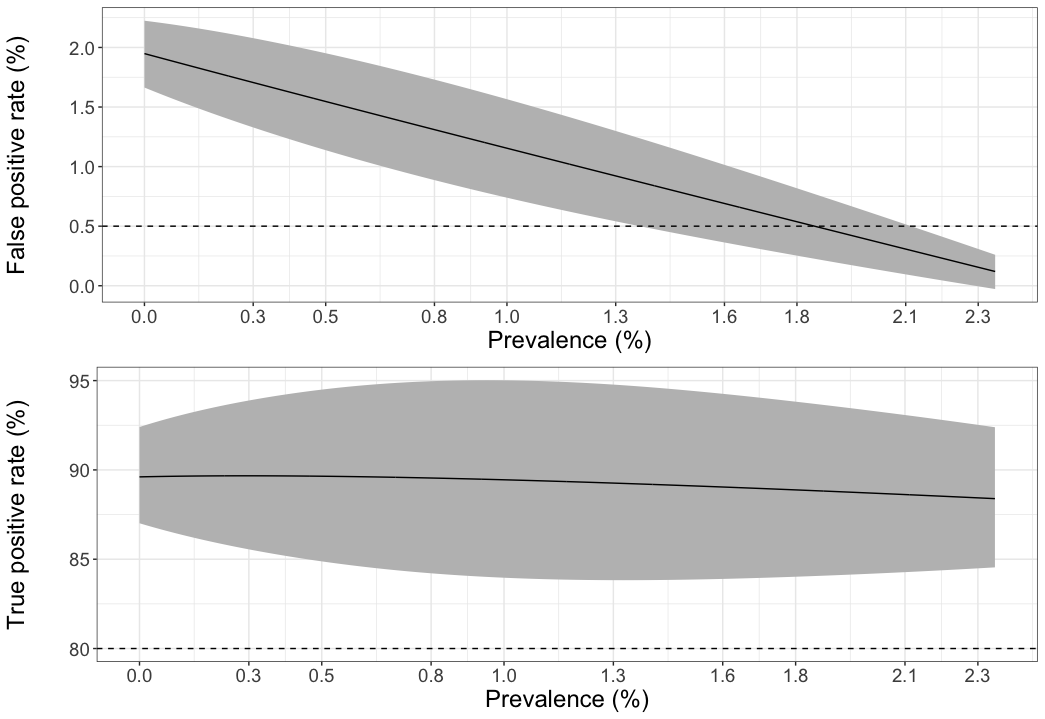}

\caption{Visualization of confidence sets, $\wThetac$ and $\wThetacalt$, for the Santa Clara and LA county studies, combined. The combined results are inconclusive (0\% prevalence is in the confidence sets). }
\label{fig:SC+LA}
\end{figure}

\newpage
\begin{figure}[t!]
\centering
$\wThetac$ in Santa Clara, LA county \& NY studies, combined
\includegraphics[scale=0.4]{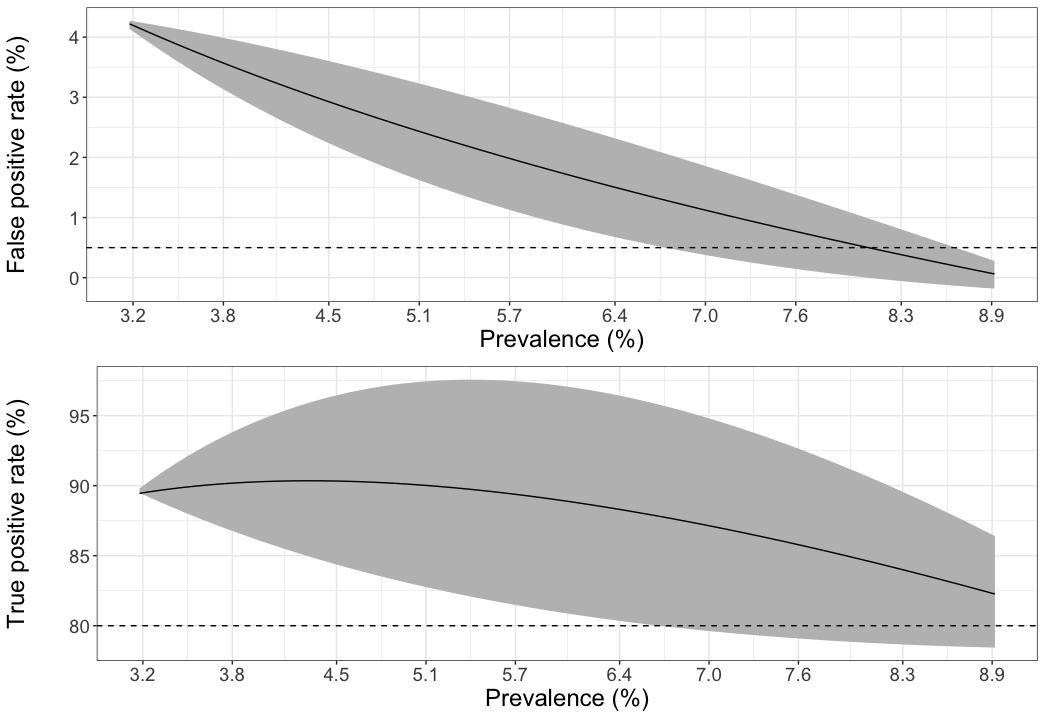}

$\wThetacalt$ in Santa Clara, LA county \& NY studies, combined

\vspace{5px}
\includegraphics[scale=0.4]{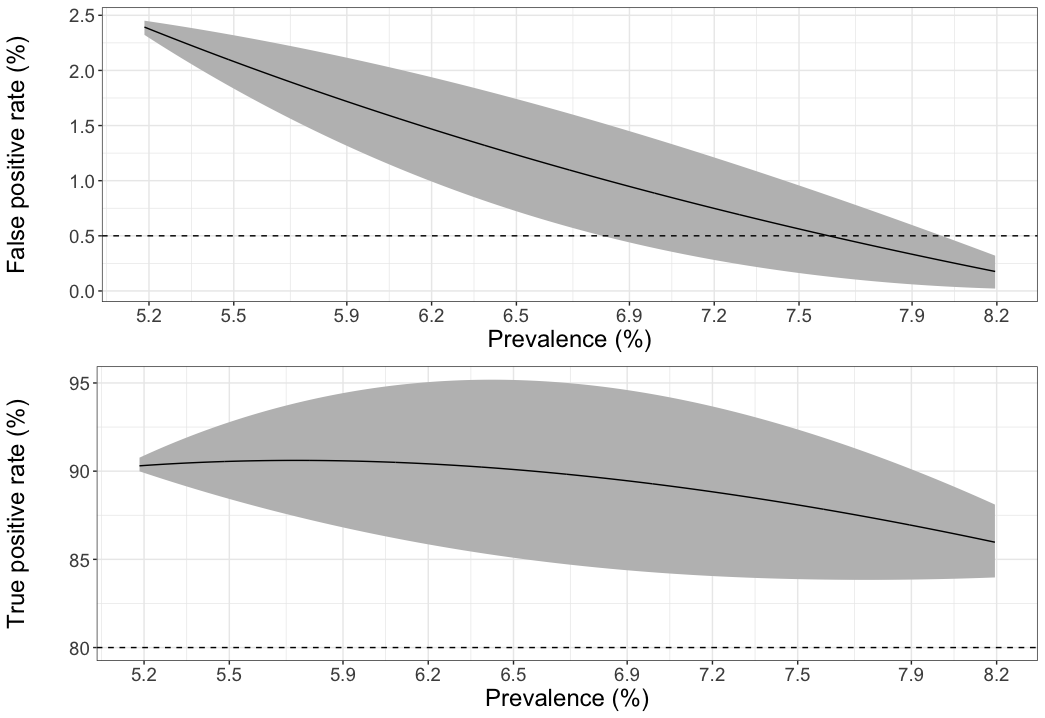}

\caption{Visualization of confidence sets, $\wThetac$ and $\wThetacalt$, for Santa Clara, LA county and New York studies, combined.
Assuming the data combination is valid, the combined dataset estimates Covid-19 prevalence in the range 5.2\%-8.2\%.
 }
\label{fig:SC+LA+NY}
\end{figure}

\clearpage

\section{Numerical example illustrating differences with likelihood-based methods}
\label{appendix:example}
Here, we illustrate the differences between our proposed inferential method 
and likelihood-based methods (both Bayesian and frequentist) through a simple numerical example. 
Suppose that $\St$ is such that $|\St| = N$, with $N$ extremely large, and fix some parameter value $\btheta_1$ to test.
Suppose also that the conditional density $f(\bS|\btheta_1)$ is defined as:
$$
f(\bs_0 | \btheta_1) = 0.95 - \epsilon_1, f(\bs_1 | \btheta_1) = \epsilon,~
\text{and}~
f(\bs | \btheta_1) = (0.05-\epsilon + \epsilon_1)/(N-2),~\forall\bs\in\St\setminus\{\bs_0, \bs_1\}.
$$
Set both $\epsilon$ and $\epsilon_1$ to be infinitesimal values.
As such, under $\btheta_1$ we observe $\bs_0$ with probability roughly equal to 0.95, 
or $\bs_1$ with some very small probability $\epsilon$, or 
observe any other remaining value from $\St$ uniformly at random.

Suppose we observe $\sobs = \bs_1$ in the data. Should we reject or accept $\btheta_1$? 

Since we can make $\epsilon$ arbitrarily small, an inferential method that focuses 
only on the likelihood function, would conclude that any $\btheta\in\Theta$ is more 
plausible than $\btheta_1$, as long as $f(\sobs | \btheta) >> \epsilon$.
Both frequentist and Bayesian methods would agree to such conclusion, 
and typically would perform inference around the mode of $f(\sobs | \btheta)$ with respect to $\btheta$.
However, our procedure makes a different conclusion, and actually 
accepts $\btheta_1$ (at the 5\% level)! The reason is that 
$$
\sum_{\bs\in\St} \mathbb{I}\{f(\bs | \btheta_1) \le f(\sobs | \btheta_1)\} f(\bs | \btheta_1)
= \epsilon + \frac{0.05-\epsilon + \epsilon_1}{N-2} (N-2) = 0.05 + \epsilon_1 > 0.05.
$$
That is, even though $f(\sobs |\btheta_1)$ is equal to a tiny value, 
there is still 5\% of the mass of $f(\bs|\btheta_1)$ at or below that value.

\end{document}